\newtheorem{definition}{Definition}
\newtheorem{theorem}{Theorem}
\newtheorem{lemma}{Lemma}
\def\BibTeX{{\rm B\kern-.05em{\sc i\kern-.025em b}\kern-.08em
    T\kern-.1667em\lower.7ex\hbox{E}\kern-.125emX}}
\newcommand{\linebreakand}{%
  \end{@IEEEauthorhalign}
  \hfill\mbox{}\par
  \mbox{}\hfill\begin{@IEEEauthorhalign}
}
\begin{document}
\title{K-stars LDP: A Novel Framework for $(p,q)$-clique Enumeration under Local Differential Privacy}

\author{
    \IEEEauthorblockN{
    Henan Sun$^\dagger$,
    Zhengyu Wu$^\dagger$,
    Rong-Hua Li$^\dagger$, 
    Guoren Wang$^\dagger$,
    Zening Li$^\dagger$}
    \IEEEauthorblockA{
    $^\dagger$ Beijing Institute of Technology, Beijing, China}
    \IEEEauthorblockA{
    magneto0617@foxmail.com,
    Jeremywzy96@outlook.com,\\
    lironghuabit@126.com,
    wanggrbit@gmail.com,
    zening-li@outlook.com
    }
}

\maketitle

\begin{abstract}
$(p,q)$-clique enumeration on a bipartite graph is critical for calculating clustering coefficient and detecting densest subgraph. It is necessary to carry out subgraph enumeration while protecting users’ privacy from any potential attacker as the count of subgraph may contain sensitive information. Most recent studies focus on the privacy protection algorithms based on edge LDP (Local Differential Privacy). However, these algorithms suffer a large estimation error due to the great amount of required noise. In this paper, we propose a novel idea of $k$-stars LDP and a novel $k$-stars LDP algorithm for $(p,q)$-clique enumeration with a small estimation error, where a $k$-stars is a star-shaped graph with $k$ nodes connecting to one node. The effectiveness of edge LDP relies on its capacity to obfuscate the existence of an edge between the user and his one-hop neighbors. This is based on the premise that a user should be aware of the existence of his one-hop neighbors. Similarly, we can apply this premise to $k$-stars as well, where an edge is a specific genre of $1$-stars. Based on this fact, we first propose the $k$-stars neighboring list to enable our algorithm to obfuscate the existence of $k$-stars with Warner’s RR. Then, we propose the absolute value correction technique and the $k$-stars sampling technique to further reduce the estimation error. Finally, with the two-round user-collector interaction mechanism, we propose our $k$-stars LDP algorithm to count the number of $(p,q)$-clique while successfully protecting users' privacy. Both the theoretical analysis and experiments have showed the superiority of our algorithm over the algorithms based on edge LDP.

\end{abstract}

\begin{IEEEkeywords}
$k$-stars LDP, local differential privacy, $(p,q)$-clique enumeration, bipartite graph
\end{IEEEkeywords}

\section{Introduction}
\label{sec:Intro}

Subgraph counting on a bipartite graph is a fundamental task in graph theory, which expands to many real-world applications, such as online customer-product analysis, author-paper relationship and so on~\cite{wang2021efficient, wang2021discovering, wang2022towards}. For example, given a bipartite graph $G(V=(U,L),E)$, a $(p,q)$-clique is a complete subgraph $B(X,Y)$ of the bipartite graph G, where $ X\subseteq U $, $ |X|=p $, $ Y\subseteq L $, $ |Y|=q $, and $ \forall (u,v)\in X\times Y $, $ (u,v)\in E(G) $, as is shown in Fig. \ref{fig:pq-clique}. For example, when $p=2$ and $q=2$, the number of $(2,2)$-clique (also known as butterfly) can be used to calculate the clustering coefficient in a bipartite graph. In addition, $(p,q)$-clique can also be used for densest subgraph detection \cite{yang2021p}. However, some sensitive information is often involved in the graph data, which may be leaked from the counting results of $(p,q)$-clique \cite{imola2021locally}.
\begin{figure}[htbp]
    \centering
    \includegraphics[scale=0.35]{./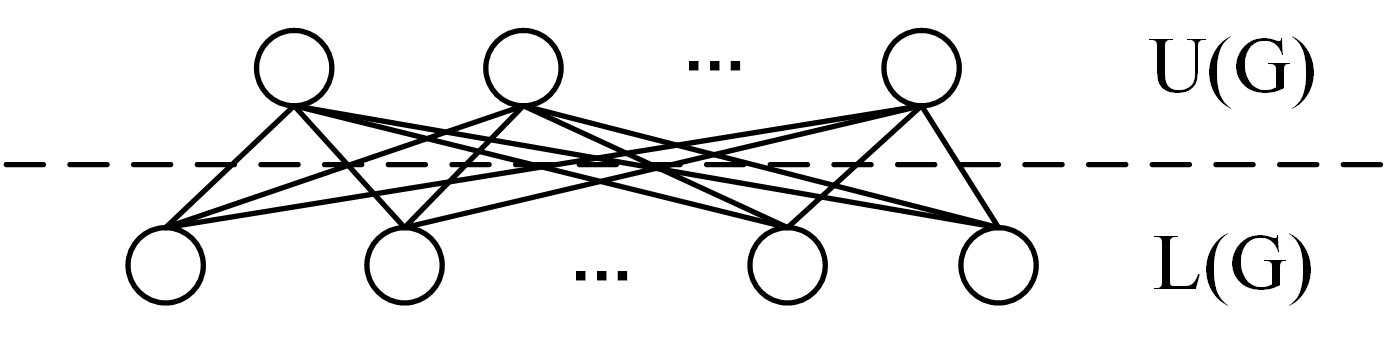}
    \caption{An example of $(p,q)$-clique.}
    \label{fig:pq-clique}
\end{figure}  
In order to analyze the structure of graphs while protecting users' privacy, DP (Differential Privacy) has been widely used as a means of privacy protection \cite{dwork2008differential, ding2021differentially, ficek2021differential, zhao2022survey, girgis2021shuffled}. Because of its capacity to protect users' data from attackers with arbitrary background knowledge, DP has become the gold standard for data privacy \cite{bi2022distribution}. DP can be divided into CDP (Central DP) and LDP (Local DP) \cite{karwa2011private, bi2020privacy, ye2020lf}. CDP assumes that users directly send their data to a trusted data collector who then analyzes and publishes all users' data with the added noises. In contrast, LDP assumes an untrustworthy-collector scenario where users perturb their data before sending it to the collector. Due to the risk of illegal access or internal fraud under the assumption of CDP, LDP is a safer way for privacy protection and has attracted more attention from researchers \cite{bebensee2019local, hidano2022degree, yang2020local}. 

The existing privacy protection algorithms for subgraph counting are usually under edge LDP \cite{kasiviswanathan2011can,imola2021locally, imola2022communication}. Users send the obfuscated edges to the collector based on Warner's RR (Randomized Response) \cite{warner1965randomized}, and then the collector sends the noisy edges back to the corresponding users to count the noisy number of the subgraph. Unfortunately, because of the ignorance of the structure information of the subgraph, these algorithms require too much noise to protect data privacy, which leads to a prohibitively large estimation error.

We observe a basic fact: to achieve effective privacy protection, traditional edge LDP obfuscates the presence of edges around a user, since he can clearly know the existence/non-existence of his one-hop neighbors under the LDP assumption. However, this functionality can also apply to subgraphs such as $2$-stars and $3$-stars, which we refer to as $k$-stars (where edge is also a $1$-stars). Using the numerical value 1 to represent the existence of $k$-stars and 0 to indicate their non-existence, users can obfuscate their presence. Our proposed method achieves this with less noise than traditional edge LDP, resulting in increased data utility. For example, to count $(3,3)$-clique, traditional edge LDP requires 6 noisy edges (namely 6 random variables), as shown in Fig. \ref{fig:33-clique} in the left. However, these 6 noisy edges are equivalent to 2 noisy $3$-stars (namely 2 random variables), as shown in Fig. \ref{fig:33-clique} in the right. The noise of 2 random variables is far less than that of 6 random variables, thus leading to increased data utility. The $k$-stars LDP is proposed based on this fact.
\begin{figure}[htbp]
    \centering
    \includegraphics[scale=0.35]{./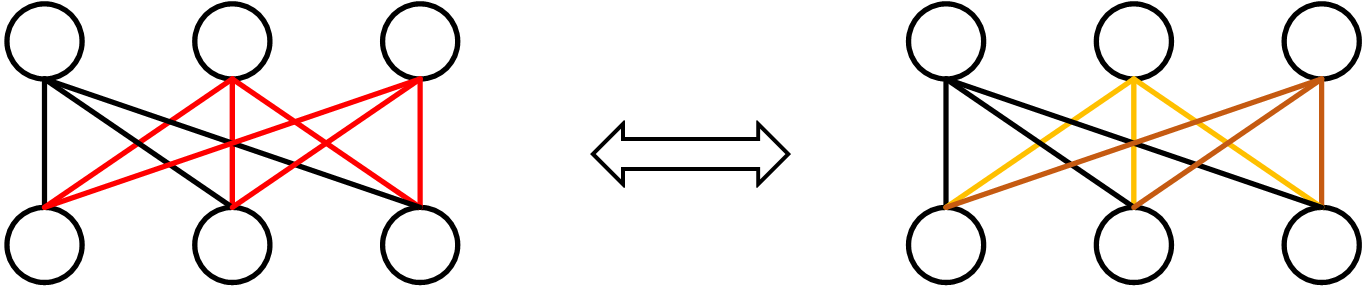}
    \caption{The example of $(3,3)$-clique. To count $(3,3)$-clique, traditional edge LDP algorithms require 6 noisy edges (6 random variables), depicted as red edges on the left. However, these noisy edges can be viewed as 2 noisy $3$-stars (2 random variables), depicted as yellow one and brown one on the right.}
    \label{fig:33-clique}
\end{figure}  

\noindent
\textbf{Our contributions.} The specific contributions can be summarized as follows: 

\begin{itemize}
\item[$\bullet$] \textbf{Novel framework}. We notice a basic fact that the effectiveness of traditional edge LDP algorithm is the ability of obfuscating the existence of $k$-stars, as edge also belongs to one of the $k$-stars. In other words, $k$-stars can be the basic obfuscated unit as edge does. Meanwhile, $k$-stars requires far less noise than edge does, which is proved in Theorem \ref{theo:utility}. Based on this fact, we propose the novel framework named $k$-stars LDP. 
\item[$\bullet$] \textbf{New algorithm}. Based on two-round user-collector interaction mechanism and RR, we calculate the noisy number of $(p,q)$-clique under the $k$-stars neighboring list. To the best of our knowledge, we are the first to study $(p,q)$-clique enumeration under local differential privacy. The theoretical analysis proves the privacy effectiveness, unbiasedness and the utility of our algorithm. 
\item[$\bullet$] \textbf{New mechanisms}. In order to reduce the estimation error, we further proposed two novel techniques: absolute value correction and $k$-stars sampling. Previous works \cite{imola2021locally, imola2022communication} neglected the negative-output problem occurred in the two-round user-collector interaction mechanism, which leads to the negative estimation of subgraph. To address this challenge, we propose the absolute value correction technique. We also propose the $k$-stars sampling technique based on traditional edge sampling to further enhance the utility of algorithms. Both theoretical analysis and experiments have proved the effectiveness of these two techniques. 
\item[$\bullet$] \textbf{SOTA Performance}. We conduct experiments on the four public datasets to evaluate the performance of our algorithm. The results show that our algorithm outperforms the edge LDP algorithm in every aspect evaluated in the experiments.
\end{itemize}

\noindent
\textbf{Organization.} The rest of this paper is organized as follows: Section \ref{sec:RW} gives a summary about recent related works. Section \ref{sec:Pre} introduces the preliminaries of the algorithms. Section \ref{sec:LD} introduces both edge LDP algorithm and $k$-stars LDP algorithm for $(p,q)$-clique enumeration. Section \ref{sec:exp} reports the experimental results of the algorithms. Section \ref{sec:con} concludes this paper.

\section{Related Work}
\label{sec:RW}
\noindent
\textbf{Subgraph counting.} Subgraph counting algorithms on bipartite graphs have been widely studied in the "non-private" mode, such as triangle, butterfly and $(p,q)$-clique, because they are often resource-consuming and time-consuming on large graphs \cite{arifuzzaman2013patric, bera2020degeneracy, bera2020count, eden2017approximately, chu2011triangle, atminas2012linear, dawande1996biclique, chen2022efficient, lu2020biclique, yang2021p}. Edge sampling, as one of the most basic techniques, is widely adopted to improve the utility and scalability of the algorithms \cite{bera2020count, eden2017approximately, tsourakakis2009doulion, wu2016counting}. Wu et al. \cite{wu2016counting} claim that edge sampling is superior than other sampling techniques, such as node sampling. Therefore, this paper proposes $k$-stars sampling based on edge sampling to improve the scalability and utility of the algorithm.\\

\noindent
\textbf{CDP on graphs.} CDP has been widely adopted as a privacy standard in graph data protection \cite{chen2019publishing, day2016publishing, kasiviswanathan2013analyzing, ding2021differentially, zhang2015private}. Zhang et al. \cite{zhang2015private} proposes a ladder framework to specify a probability distribution over possible outputs which aims to maximize the utility for the given input while providing the required CDP privacy level. Ding et al. \cite{ding2021differentially} proposes a graph projection method to reduce the sensitivity caused by a large graph and to satisfy node-level CDP. However, all of the algorithms aforementioned suffer from the risk of illegal access or internal fraud, due to the trusted collector assumption of CDP.\\

\noindent
\textbf{LDP on graphs.} LDP on graphs has been studied in some recent works \cite{ye2020towards, ye2020lf, sun2019analyzing, imola2021locally,imola2022communication}. Sun et al. \cite{sun2019analyzing} proposed a multi-stage mechanism to solve problems of triangle subgraph and k-clique enumeration in the extended local view scenario with the optimized sensitivity and reduced estimation error. Imola et al. \cite{imola2022communication} proposed an algorithm to solve the triangle counting problem based on the multi-stage mechanism. It adopts edge sampling after RR (Randomized Response), "4-cycle strategy" and "double clipping technique" to reduce the estimation error and the communication cost between users and collector.  However, the above algorithms are all based on edge LDP, which ignore the structure information of the subgraph and require too much noise, leading to a large estimation error.
\section{Preliminaries}
\label{sec:Pre}
\subsection{Notations}
Let $\mathbb{N}$, $\mathbb{R}$ and $\mathbb{Z}$ be the natural numbers, real numbers and integers, respectively. For $ i\in \mathbb{N} $, let $[i]$ be the natural number sets from 1 to i, namely $[i]=\left\{1,2,\cdots ,i\right\}$.

Let $G=(V=(U,L),E)$ be an undirected bipartite graph, where U(G) and L(G) represent the sets of  nodes in the upper layer and lower layer respectively, and E(G) be the set of edges in the bipartite graph G. Let n be the number of nodes in V and $v_{i}\in V$ be the i-th node in V ($1\le i\le n$). Notice that there is no edge between nodes both in the upper layer or both in the lower layer, namely $U\times U=\emptyset$, $L\times L=\emptyset$. Let $\mathcal{G}$ be a set of graphs and $f_{pq}:\mathcal{G}\rightarrow \mathbb{N}$ be the function of $(p,q)$-clique enumeration that take $G\in\mathcal{G}$ as input and outputs the count of $(p,q)$-clique in G.

In the scenario of traditional edge LDP, let $\textbf{A}=(a_{i,j})\in \left\{ 0,1\right\}^{n\times n}$ be the adjacency matrix where $a_{i,j}$ be the edge between users $v_{i}$ and $v_{j}$. An edge neighboring list $\textbf{a}_{i}$ is a row in the adjacency matrix \textbf{A}, which is employed in edge LDP algorithms $\mathcal{R}$ \cite{imola2022communication}. 

\subsection{Local Differential Privacy on Graphs}
\label{sec:LDP definition}
Local DP (LDP) protects user privacy by obfuscating data when users submit it to the collector \cite{kasiviswanathan2011can,ye2020lf,imola2022communication}. It prevents collector from accessing users’ real data, thus avoiding the risk of internal fraud or invalid access. Edge LDP is the traditional way to apply LDP to graph. It protects the edge between users so that any attacker cannot distinguish between such two users, one with the edge and the other without. Another alternative, named node LDP, hides the existence of a user, as well as all of his edges \cite{hay2009accurate, zhang2020differentially}. However, in the scenario of subgraph counting, users need to send their data to the collector (whether they are noisy or not). That means the collector has the knowledge of all users' ID, which violates the requirement of node LDP. Thus, we believe edge LDP is more appropriate to the problem of subgraph counting, as \cite{zhang2015private, sun2019analyzing, ye2020towards, ye2020lf, imola2021locally, imola2022communication} do. The specific definition of edge LDP is as follows: 

\begin{definition}[$\epsilon$-edge LDP \cite{imola2022communication}]
Let $n\in \mathbb{N}$, $i\in [n]$, $\epsilon \in \mathbb{R}$. $\mathcal{R}_{i}$ provides $\epsilon$-edge LDP if for any two neighboring list $\textbf{a}_i, \textbf{a}_{i}^{’}\in \left\{0,1\right\}^n$ that differ in one position and $t\in Range(\mathcal{R}_{i})$:
\begin{align}
Pr[\mathcal{R}_{i}(\textbf{a}_{i})=t]\le e^{\epsilon}Pr[\mathcal{R}_{i}(\textbf{a}_{i}^{'})=t]. \tag{1}
\end{align}
\end{definition}

However, for complex subgraphs, such as butterfly, $(2,3)$-clique, $(3,3)$-clique etc., applying edge LDP often requires a great amount of noise, which generates the low data utility.  In fact, as we illustrate in Section \ref{sec:Intro}, $k$-stars can also be directly obfuscated as edges. Therefore, we can extend edge LDP to $k$-stars LDP, which protects user's $k$-stars so that any attacker cannot distinguish such two users, one with the $k$-stars and the other without. In order to apply RR to $k$-stars LDP, we first define the $k$-stars neighboring list, similar to edge LDP: 

\begin{definition}[$k$-stars neighboring list]
Let $KS_{ij_{1}\cdots j_{k}}$ be the $k$-stars centering on user $v_{i}$ and surrounded by users $v_{j_{1}}$ to $v_{j_{k}}$. $KS_{ij_{1}\cdots j_{k}}=1$ if such $k$-stars exists. Otherwise $KS_{ij_{1}\cdots j_{k}}=0$. $k$-stars neighboring list $\textbf{KS}_i$ is the list composed of such $k$-stars centering on user $v_i$.
\end{definition}

In fact, $k$-stars neighboring list is a local projection of edge neighboring list since a user can clearly know the existence of his $k$-stars, as well as his edges. With $k$-stars neighboring list, a $k$-stars LDP algorithm can be obtained. The specific definition of $k$-stars LDP is as follows:

\begin{definition}
Let $n\in \mathbb{N}$, $i\in [n]$, $\epsilon\in \mathbb{R}$. $\mathcal{R}_{i}$ provides $\epsilon$-$k$-stars edge LDP if for any two $k$-stars neighboring list $\textbf{KS}_i$, $\textbf{KS}_{i}^{’}\in\left\{0,1\right\}^n$ that differ in one position and $t\in Range(\mathcal{R}_i)$:
\begin{align}
Pr[\mathcal{R}_{i}(\textbf{KS}_{i})=t]\le e^{\epsilon}Pr[\mathcal{R}_{i}(\textbf{KS}_{i}^{'})=t]. \tag{2}
\end{align}
\end{definition}
 
\subsection{Two-round User-collector Interaction Mechanism}
Two-round user-collector interaction mechanism is applied in some recent research and achieves a better performance than one round user-collector interaction\cite{imola2021locally,imola2022communication}. 
\begin{figure}[htbp]
    \centering
    \includegraphics[scale=0.6]{./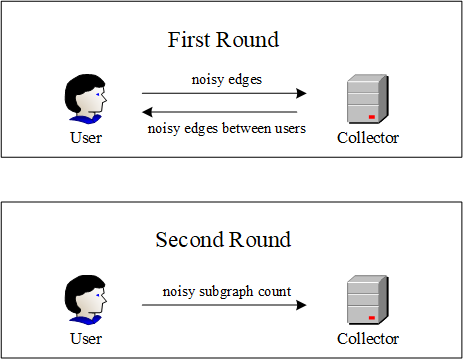}
    \caption{The overview of two-round user-collector interaction mechanism.}
    \vspace{-0.4cm}
    \label{fig:user-collector}
\end{figure} 

As is shown in Fig. \ref{fig:user-collector}, Users obfuscates their data and send it to the collector at the first round. The collector then sends the data back to the corresponding users so that they can enumerate subgraphs locally. Users then calculate the noisy subgraph count and send it to the collector at the second round. At last, the collector aggregates and corrects the noisy subgraph count, obtaining the unbiased estimation. The two-round user-collector interaction mechanism provides LDP guarantee by post-processing property and is applied in this paper.

\subsection{Utility Discussion}
Let $f_{pq}(G)$ be the true count of $(p,q)$-clique and $\tilde{f}_{pq}(G)$ be the unbiased private estimate which satisfies LDP(edge LDP or $k$-stars LDP). The $L_{2}$ loss and relative loss are employed as utility metrics in this paper.

Specifically, $L_2$ loss is the squared error which maps the distance between $f_{pq}(G)$ and $\tilde{f}_{pq}(G)$. As per the bias-variance decomposition theorem, $L_{2}^{2}(f_{pq}(G),\tilde{f}_{pq}(G))=\mathbb{E}(f_{pq}(G)-\tilde{f}_{pq}(G))^2$. Our theoretical analysis employs the expected $L_2$ loss, as \cite{imola2021locally,imola2022communication} do.

However, $L_2$ loss tends to be very large when the subgraph count is large, which is inconvenient to observe the intrinsic properties. As a result, the relative error given by $\frac{|f_{pq}(G)-\tilde{f}_{pq}(G)|}{max(f_{pq}(G),\alpha)}$ is employed in the experiments, where $\alpha \in \mathbb{R}$ is a small value.

\section{Locally Differential Privacy Algorithms for $(p,q)$-clique Enumeration}
\label{sec:LD}

In this section, we first present the baseline algorithm of traditional edge LDP, which employs two-round user-collector mechanism to provide LDP. Then, we introduce the proposed $k$-stars LDP algorithms for $(p,q)$-clique enumeration. Finally, the theoretical performance of both edge LDP and $k$-stars LDP algorithm is analyzed.

\subsection{Baseline Algorithm of Edge LDP}
Inspired by the recent studies \cite{imola2022communication}, we propose the baseline algorithm providing $\epsilon$-edge LDP, as is illustrated below.

Specifically, let $n\in \mathbb{N}$, $i\in [n]$, $\epsilon\in \mathbb{R}$ be the privacy budget, $\textbf{a}_i$ \& $\textbf{a}^{'}_i$ be the edge neighboring lists and $\mathcal{R}_i$ be the local randomized algorithm implemented on user $v_i$. Following the framework of two-round user-collector interaction mechanism, user $v_i$ obfuscates his neighboring list $\textbf{a}_i$ into $\textbf{a}^{'}_{i}$ by RR and sends it to the collector at the first round. Then, the collector sends the noisy neighboring list $\textbf{a}^{'}_{i}$ to the user $v_j$ who needs it to count the noisy numbers of $(p,q)$-clique and near $(p,q)$-clique. Near $(p,q)$-clique is the subgraph with only one edge less than $(p,q)$-clique, as is shown in Fig. \ref{fig:near-pqclique}. The collector aggregates the noisy numbers of the two subgraphs and obtains the unbiased estimation of $(p,q)$-clique with the corrective term. The unbiasedness and utility of the process above are analyzed in Theorem \ref{theo:unbiasedness} and Theorem \ref{theo:utility}, respectively. Algorithm \ref{alg:algorithm 1} presents the overview of the baseline algorithm of traditional edge LDP.
\begin{figure}[htbp]
	\centering
	\subfloat[$(p,q)$-clique]{\label{fig:pqclique_1}\includegraphics[scale=0.4]{./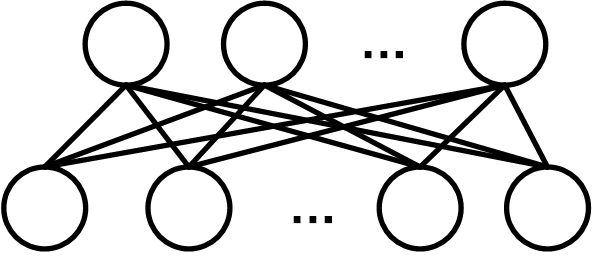}}\quad
	\subfloat[near $(p,q)$-clique]{\label{fig::near-pqclique_1}\includegraphics[scale=0.4]{./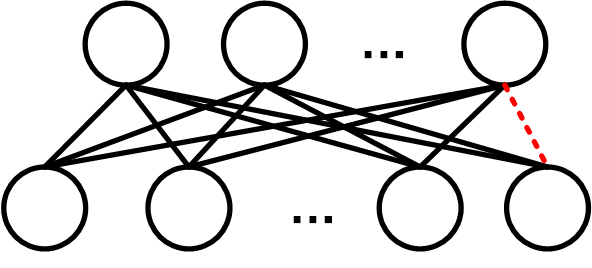}}	
	\caption{The examples of $(p,q)$-clique and near $(p,q)$-clique.}
    \label{fig:near-pqclique}
\end{figure}
\begin{algorithm}[t]
\caption{Baseline algorithm of traditional edge LDP.}
\label{alg:algorithm 1}

\KwIn{Edge neighboring list $\textbf{a}_{1},\cdots ,\textbf{a}_{n}\in \left\{ 0,1\right\}^{n}$, privacy budget $\epsilon\in \mathbb{R}$.}
\KwOut{Private unbiased estimate $\tilde{f}_{pq}^{E}(G)$ of $(p,q)$-clique.}

\BlankLine
\tcp{set the flip probability of RR.}
[$v_i$, c] $\mu\leftarrow \frac{1}{e^{\epsilon}+1}$\; 
\tcc{First round of user-collector interaction.}
\For{i from 1 to n}
{
	\tcp{$d_i$ is the degree of user $v_i$.}
	[$v_i$] $\textbf{a}_{i}^{'}\leftarrow (RR_{\mu}(a_{i,1}),\cdots ,RR_{\mu}(a_{i,d_i}))$\; 
	[$v_i$] Send $\textbf{a}_{i}^{'}$ to the collector\;
} 
\tcp{The collector obtains the noisy edge set $\textbf{E}'$ and sends it to the corresponding users.}
[c] $\textbf{E}' \leftarrow (\textbf{a}_{1}^{'},\cdots ,\textbf{a}_{n}^{'})$\;
[c] Send $\textbf{E}'$ to users\;
\tcc{Second round of user-collector interaction.}
\For{i from 1 to n}
{
	[$v_i$] receive noisy edges from the collector\;
	\tcp{CalcPQClique1 is the function of counting $(p,q)$-clique and near $(p,q)$-clique.}
	[$v_i$] $f_i,s_i \leftarrow CalcPQClique1(\textbf{a}_i,\textbf{E}')$\;
	[$v_i$] $\tilde{f}_i \leftarrow \frac{1}{(1-2\mu)^{(p-1)q}}(f_{i}-\mu s_{i})$\;
	[$v_i$] Return $\tilde{f}_i$ to the collector\;
}
[c] $\tilde{f}_{pq}^{E}(G) \leftarrow \sum_{i=1}^{n}\tilde{f}_i$\;
return $\tilde{f}_{pq}^{E}(G)$

\end{algorithm}

\subsection{K-stars LDP Algorithm}
\label{sec:k-stars LDP}
\subsubsection{Motivation}
Recall from the Section \ref{sec:Intro} that the underlying premise for edge LDP is that each user should be clearly know the existence or non-existence of their one-hop neighbors. Essentially, edge possesses such favorable properties that can be exploited in the LDP scenario: (i) edge is a simple subgraph that can be directly obfuscated; (ii) edges can form more intricate subgraph structures when combined. We notice that not only edges but also k-stars possess the above properties. For example, a user can directly obfuscate the existence of any k-stars centering on himself, since he is familiar with all his one-hop neighbors. Also, k-stars can form other complicate subgraph structures, as is illustrated in Section \ref{sec:Intro}. However, k-stars, as a basic obfuscated unit, requires far less noise than edge does, which is proved in Theorem \ref{theo:utility}. Based on this fact, we are motivated to propose the novel idea of $k$-stars LDP and the algorithm for $(p,q)$-clique enumeration. Proved by the theoretical analysis in Section \ref{sec:theoretical analysis}, $k$-stars LDP only sacrifices a relative small privacy but achieves a much better performance than traditional edge LDP.

\subsubsection{Algorithm}
We first illustrate $k$-stars neighboring list and $k$-stars sampling technique specifically. Then, we introduce the overall $k$-stars LDP algorithm with a pseudocode.\\

\noindent
\textbf{$K$-stars neighboring list.} $K$-stars neighboring list is the core concept of $k$-stars LDP, which is a further extension of the traditional edge neighboring list.
\begin{figure}[htbp]
    \centering
    \hspace{-0.1cm}
    \includegraphics[scale=0.35]{./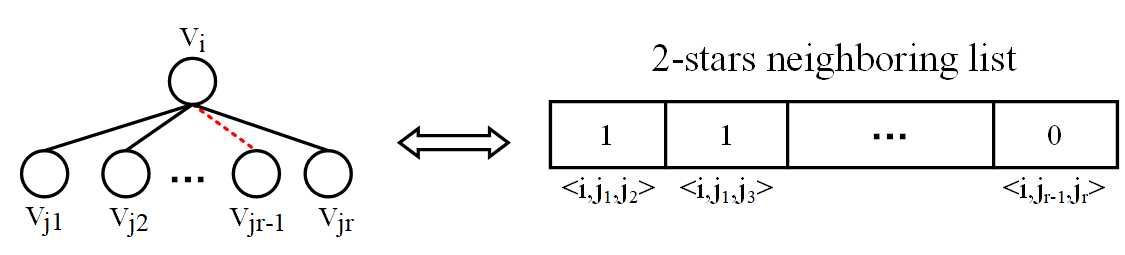}
    \caption{An example of $k$-stars neighboring list $(k=2)$.}
    \label{fig:k-stars-nl}
\end{figure}
As is shown in Fig. \ref{fig:k-stars-nl}, one position in the $k$-stars neighboring list of user $v_i$ represents the existence/inexistence of a $k$-stars centering on him, where 1 denotes the existence and 0 denotes the inexistence. $K$-stars neighboring list is a local projection of the traditional edge neighboring list since users can locally enumerate all of the $k$-stars centering on him without the knowledge of his two-hop neighbors. With the $k$-stars neighboring list, two-round user-collector interaction mechanism can be applied and the unbiased estimation of $(p,q)$-clique can be obtained similar to the baseline algorithm.\\

\noindent
\textbf{$k$-stars LDP algorithm for $(p,q)$-clique enumeration.} Algorithm \ref{alg:algorithm 2} shows the whole process of $k$-stars LDP algorithm for $(p,q)$-clique enumeration.

The first round of user-collector interaction appears from line 2 to 8 in Algorithm \ref{alg:algorithm 2}. In this round, user $v_i$ generates the $k$-stars neighboring list and obtains the noisy one with RR, sending it to the collector later. The collector receives the data and forms the noisy $k$-stars set $\textbf{KS}'$, sending it to the corresponding users later.

The second round of user-collector interaction appears from line 9 to 15. User $v_i$ receives the noisy $k$-stars set $\textbf{KS}'$ and counts the numbers of $(p,q)$-clique and $(p-1,q)$-clique locally, denoted as $f_i$ and $s_i$ respectively. Note that $f_i$ of noisy $(p,q)$-clique here is biased, thus needs to be corrected. The corrective term $s_i$ is the number of $(p-1,q)$-clique centering on user $v_i$. User $v_i$ calculates the unbiased estimate of $(p,q)$-clique through the equation $\tilde{f}_i=\frac{1}{(1-2\mu)^{p-1}}(f_i-\mu s_i)$. We will prove the correctness of this equation in Section \ref{sec:theoretical analysis}. Finally, the collector obtains the unbiased estimate of $(p,q)$-clique in the bipartite graph G.\\

\noindent
\textbf{Absolute value correction technique.} The unbiased terms of Algorithm \ref{alg:algorithm 1} and Algorithm \ref{alg:algorithm 2} can be written respectively: $\tilde{f}_i = \frac{1}{(1-2\mu)^{(p-1)q}}(f_i-\mu s_i)$ and $\tilde{f}_i = \frac{1}{(1-2\mu)^{p-1}}(f_i-\mu s_i)$, as illustrated aforementioned. When the graph is sparse and the noisy number of $(p,q)$-clique is smaller than that of near $(p,q)$-clique (or $(p-1,q)$-clique), the algorithms may generate negative values. This violates the constraints of the problem of subgraph count, since the counts must be non-negative. In order to solve this negative-output problem, we propose the technique of absolute value correction, as illustrated by equation (3) and (4):
\begin{align}
\tilde{f}_i = \frac{1}{2(1-2\mu)^{(p-1)q}}((f_i-\mu s_i)+|f_i-\mu s_i|),\tag{3}
\end{align}
\begin{align}
\tilde{f}_i = \frac{1}{2(1-2\mu)^{p-1}}((f_i-\mu s_i)+|f_i-\mu s_i|).\tag{4}
\end{align}
When the algorithms produce a negative value, it would be corrected to 0; otherwise, it would still be unbiased since the output is divided by 2. By employing this novel absolute value correction technique, we solve the negative-output problem and enhance the utility of algorithms while ensure the unbiasedness.\\

\noindent
\textbf{$k$-stars sampling technique.} Edge sampling is one of the most practical technique to improve the scalability of algorithms. In light of the superiority of edge sampling than other techniques claimed by \cite{wu2016counting}\
, the basic idea of edge sampling is borrowed in this paper. We sample the noisy $k$-stars neighboring list after applying RR on the original $k$-stars neighboring list and then propose the $k$-stars sampling technique.
 
\begin{algorithm}[t]
\caption{$K$-stars LDP algorithm for $(p,q)$-clique enumeration.}
\label{alg:algorithm 2}

\KwIn{Edge neighboring list $\textbf{a}_{1},\cdots ,\textbf{a}_{n}\in \left\{ 0,1\right\}^{n}$, privacy budget $\epsilon\in \mathbb{R}$.}
\KwOut{Private unbiased estimate $\tilde{f}_{pq}^{K}(G)$ of $(p,q)$-clique.}

\BlankLine
\tcp{set the flip probability of RR.}
[$v_i$, c] $\mu\leftarrow \frac{1}{e^{\epsilon}+1}$\; 
\tcc{First round of user-collector interaction.}
\For{i from 1 to n}
{
	\tcp{KSNLGen is the projection function from edge neighboring list to $k$-stars neighboring list.}
	[$v_i$] $\textbf{KS}_i \leftarrow KSNLGen(\textbf{a}_i)$\;
	\tcp{$t_i$ is the number of $k$-stars centering on user $v_i$.}
	[$v_i$] $\textbf{KS}_{i}^{'}\leftarrow (RR_{\mu}(KS_{i,1}),\cdots ,RR_{\mu}(KS_{i,t_i}))$\; 
	[$v_i$] Send $\textbf{KS}_{i}^{'}$ to the collector\;
} 
\tcp{The collector obtains the noisy $k$-stars set $\textbf{KS}'$ and sends it to the corresponding users.}
[c] $\textbf{KS}' \leftarrow (\textbf{KS}_{1}^{'},\cdots ,\textbf{KS}_{n}^{'})$\;
[c] Send $\textbf{KS}'$ to users\;
\tcc{Second round of user-collector interaction.}
\For{i from 1 to n}
{
	[$v_i$] receive noisy $k$-stars from the collector\;
	\tcp{CalcPQClique2 is the function of counting $(p,q)$-clique and $(p-1,q)$-clique.}
	[$v_i$] $f_i,s_i \leftarrow CalcPQClique2(\textbf{KS}_i,\textbf{KS}')$\;
	[$v_i$] $\tilde{f}_i \leftarrow \frac{1}{(1-2\mu)^{p-1}}(f_{i}-\mu s_{i})$\;
	[$v_i$] Return $\tilde{f}_i$ to the collector\;
}
[c] $\tilde{f}_{pq}^{K}(G) \leftarrow \sum_{i=1}^{n}\tilde{f}_i$\;
return $\tilde{f}_{pq}^{K}(G)$

\end{algorithm}

\subsection{Theoretical Analysis}
\label{sec:theoretical analysis}
The theoretical guarantee on the privacy, unbiasedness and utility of both edge LDP and $k$-stars LDP algorithms for $(p,q)$-clique enumeration is introduced in this section.\\

\noindent
\textbf{Privacy.}The relationship between edge LDP and $k$-stars LDP on privacy are represented in Theorem \ref{theo:privacy} below.

\begin{theorem}
\label{theo:privacy}
If a local randomized algorithm $\mathcal{R}$ provides $\epsilon$-$k$-stars LDP, it also provides $\frac{k2^{(k-1)}}{2^k-1}\epsilon$-edge LDP.
\end{theorem}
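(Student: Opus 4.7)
The plan is to derive the edge-LDP guarantee from the $k$-stars LDP guarantee by treating the deterministic projection $T:\mathbf{a}_i \mapsto \mathbf{KS}_i$ sending an edge neighboring list to its induced $k$-stars neighboring list as a preprocessing step, and then invoking group privacy. Let $\mathbf{a}_i$ and $\mathbf{a}_i'$ differ at exactly one edge $(v_i,v_j)$. Writing $\mathbf{KS}_i=T(\mathbf{a}_i)$, $\mathbf{KS}_i'=T(\mathbf{a}_i')$, and letting $m$ denote the Hamming distance between them, the single-position inequality in the definition of $\epsilon$-$k$-stars LDP composes, so one obtains $\Pr[\mathcal{R}(\mathbf{KS}_i)=t] \le e^{m\epsilon}\Pr[\mathcal{R}(\mathbf{KS}_i')=t]$ by telescoping the $\epsilon$-$k$-stars LDP inequality along a chain of $m$ single-bit edits connecting $\mathbf{KS}_i$ to $\mathbf{KS}_i'$. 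It then suffices to upper bound $m$ by $\frac{k\,2^{k-1}}{2^k-1}$.

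Characterizing the flipped positions, a $k$-stars centered at $v_i$ with leaf set $S$ flips its indicator iff $v_j \in S$ and every other leaf of $S$ is a current neighbor of $v_i$. The target constant coincides with the average cardinality of a non-empty subset of a $k$-element ground set, since $\sum_{j=1}^{k} j\binom{k}{j}=k\,2^{k-1}$ and $\sum_{j=1}^{k} \binom{k}{j}=2^k-1$. This algebraic match strongly suggests the proof proceeds via an amortized count over the $2^k-1$ nonempty ``affected-spoke'' subsets that can meet the perturbed edge, rather than a vertex-by-vertex worst-case enumeration.

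The central obstacle is that a direct enumeration of flipped $k$-stars yields $\binom{d-1}{k-1}$, with $d$ the degree of $v_i$, which is graph-dependent and generally much larger than the target $\frac{k\,2^{k-1}}{2^k-1}$. The nontrivial content of the proof is therefore to exchange this worst case for the averaged constant. The route I would try first is to symmetrize over the $k$ possible roles of the flipped spoke inside each affected $k$-stars and assign sensitivity proportionally, so that the effective composition depth becomes the averaged quantity rather than the maximum. If this does not close the gap, a fallback is to restrict $T$ to a canonical representative set of $k$-stars in which each edge participates in at most $\frac{k\,2^{k-1}}{2^k-1}$ entries, and verify that the $k$-stars LDP algorithm of Section~\ref{sec:k-stars LDP} still produces the correct unbiased estimator under this representation.
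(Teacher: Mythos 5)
The reduction you set up---project both edge lists through $T$, then telescope the single-position $k$-stars LDP inequality along a chain of single-bit edits---is the natural rigorous route, but the step you defer, bounding the Hamming distance $m$ by $\frac{k2^{k-1}}{2^k-1}$, is precisely where it breaks, and your own diagnosis confirms this. Flipping the one edge $(v_i,v_j)$ flips every entry $KS_{iS}$ with $j\in S$ and $S\setminus\{j\}\subseteq N(v_i)$, so $m=\binom{d_i-1}{k-1}$: an integer that grows with the degree of $v_i$, while the target constant is a fixed fraction smaller than $k$ (e.g.\ $\tfrac{4}{3}$ for $k=2$). Neither of your proposed rescues closes this gap. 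Symmetrizing over the roles of the flipped spoke cannot help, because an LDP guarantee is a worst-case, per-output, per-pair multiplicative bound: the exponent obtained by composition is governed by the actual number of differing coordinates of the two projected inputs, not by any averaged or fractionally allocated sensitivity, so no amortization scheme lowers $m\epsilon$ to $\frac{k2^{k-1}}{2^k-1}\epsilon$. The fallback of restricting $T$ to a canonical subfamily of $k$-stars in which each edge meets only $O(1)$ entries changes the object being analyzed: it proves something about a different neighboring-list representation and a different mechanism, whereas the unbiased estimator of the paper's $k$-stars algorithm needs the noisy indicators of \emph{all} $q$-subsets around each center, so the theorem as stated would not follow. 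In short, your plan does not (and, as you suspected, cannot) reach the stated constant.

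For comparison, the paper does not run a group-privacy reduction at all; it argues in the reverse direction with an average-case count. A flip of one $k$-stars entry from $0$ to $1$ turns one of the $2^k-1$ configurations having at least one missing spoke into a complete $k$-stars; a configuration missing $i$ spokes (there are $\binom{k}{i}$ of them) gains $i$ edges, and treating these configurations as equally likely the expected number of edge flips per $k$-stars flip is $\frac{\sum_{i=1}^{k} i\binom{k}{i}}{\sum_{i=1}^{k}\binom{k}{i}}=\frac{k2^{k-1}}{2^k-1}$, which the paper then reads off as the privacy conversion factor. That is exactly the ``average cardinality of a nonempty subset'' algebra you noticed, but it enters as an expected flip count over configurations rather than as a worst-case Hamming bound, so the paper's argument is heuristic in nature and is not the per-pair composition bound your proposal would need. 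Your instinct that a rigorous worst-case argument yields $\binom{d_i-1}{k-1}\epsilon$ rather than the averaged constant is correct; the missing ingredient is not a cleverer combinatorial trick within your framework but the acceptance (or avoidance) of the averaging step that the paper itself uses.
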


\begin{proof}
When $k=1$, 1-stars is edge. As presented in Fig. \ref{fig:privacy-one}, a flip in the 1-stars neighboring list would just cause a flip in the edge neighboring list. Thus, $\epsilon$-1-stars LDP provides $\epsilon$-edge LDP.

When $k=2$, the situation is changed because of the asymmetry of 0/1 in 2-stars neighboring list, as shown in Fig. \ref{fig:privacy-two}. When a position in 2-stars neighboring list is flipped from 0 to 1, its geometric meaning is presented as follow: there are two subgraphs, one with no edge connecting between nodes (also called three-nodes), the other with only one edge connecting two nodes out of three in total (also called one-edge-one-node). Two such subgraphs are considered as the non-existence of $2$-stars, denoted as 0 numerically. A flip from 0 to 1 is that these two subgraphs changes into a $2$-stars. However, there are 2 changes from one-edge-one-node to $2$-stars: the one with the edge on the left side changes into a $2$-stars and the one with the edge on the right side changes into a $2$-stars. Thus, there are 3 situations in total. The change from three-nodes to $2$-stars would add two edges, while the change from one-edge-one-node to $2$-stars would add one edge. Since all of the three changes are equivalent in the view of probability, a flip in the 2-stars neighboring list would cause $\frac{2\times 1 + 1\times 2}{3}=\frac{4}{3}$ flips in edge neighboring list averagely. Thus, $\epsilon$-2-stars LDP provides $\frac{4}{3}\epsilon$-edge LDP.

Similarly, the geometric meaning of a flip in the $k$-stars neighboring list is that such k subgraphs, from (k+1)-nodes to (k-1)-edges-one-node, change into a $k$-stars, as is shown in Fig. \ref{fig:privacy-k}. From the perspective of probability, a flip in the $k$-stars neighboring list would cause $\frac{\sum_{i=1}^{k} iC_{k}^{i}}{\sum_{i=1}^k C_{k}^{i}}=\frac{k2^{(k-1)}}{2^k-1}$ flips in edge neighboring list averagely. Thus, $\epsilon$-$k$-stars LDP provides $\frac{k2^{(k-1)}}{2^k-1}\epsilon$-edge LDP.

\begin{figure}[htbp]
	\hspace{-1.4cm}
    \begin{minipage}[b]{0.4\linewidth}
    		\centering
		\subfloat[][1-stars]
		{
			\includegraphics[width=0.6\linewidth]{./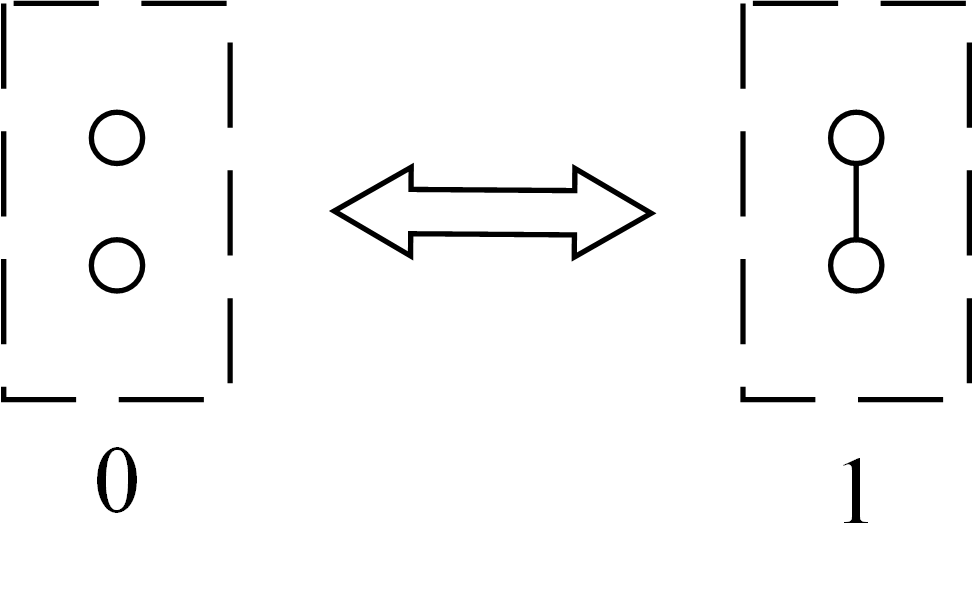}
			\label{fig:privacy-one}
		}\quad
		\subfloat[][2-stars]
		{
			\includegraphics[width=0.6\linewidth]{./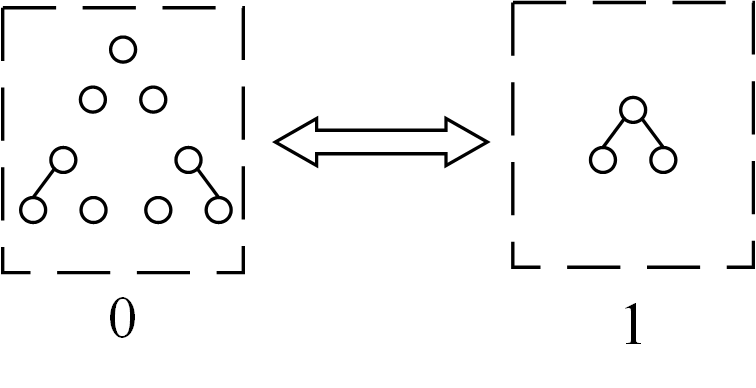}
			\label{fig:privacy-two}
		}
    \end{minipage}
    \begin{minipage}[b]{0.6\linewidth}
    		\centering
		\subfloat[][$k$-stars]
		{
			\includegraphics[scale=0.25]{./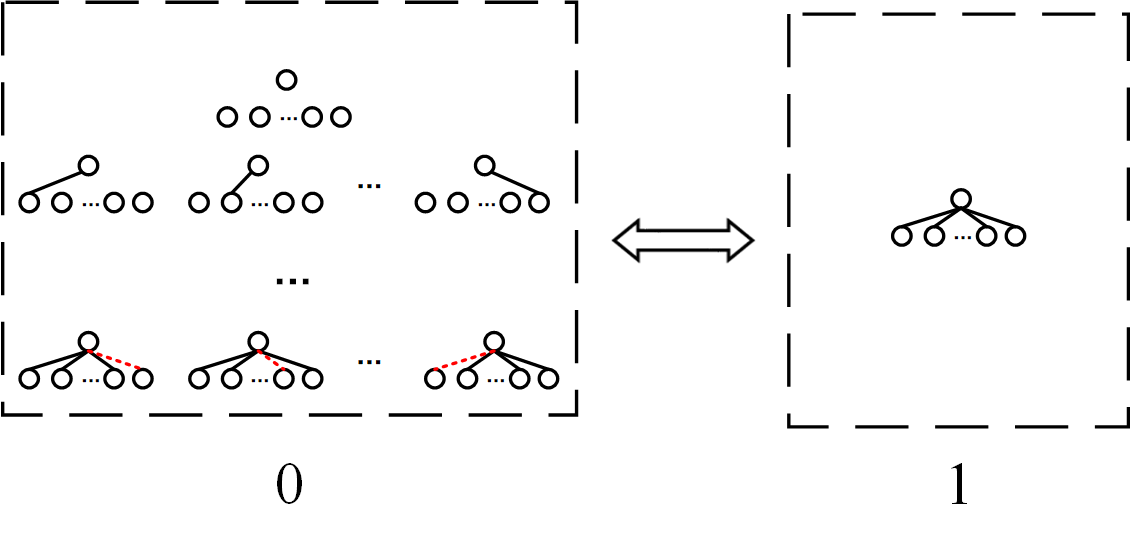}
			\label{fig:privacy-k}
		}
    \end{minipage}
    \caption{A flip in several kinds of $k$-stars neighboring list.}
    \label{fig:flip}
\end{figure}

\end{proof}

\noindent
\textbf{Unbiasedness.} The unbiasedness of a randomized algorithm is vital to the improvement of its utility, according to the theorem of bias-variance decomposition. Both algorithm \ref{alg:algorithm 1} and algorithm \ref{alg:algorithm 2} proposed in this paper are unbiased, proved by the Theorem \ref{theo:unbiasedness} below.

\begin{theorem}
\label{theo:unbiasedness}

Let $f_{pq}(G)$ be the true number of $(p,q)$-clique in graph G, $\tilde{f}_{pq}^{E}(G)$ and $\tilde{f}_{pq}^{K}(G)$ be the estimation output of algorithm \ref{alg:algorithm 1} and algorithm \ref{alg:algorithm 2} respectively. Then $E(\tilde{f}_{pq}^{E}(G))=E(\tilde{f}_{pq}^{K}(G))= f_{pq}(G)$.

\end{theorem}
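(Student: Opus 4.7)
The plan is to establish the per-user identity $E[\tilde f_i]=f_i^{\mathrm{true}}$, where $f_i^{\mathrm{true}}$ denotes the $(p,q)$-cliques attributed to $v_i$ under a fixed attribution rule (say, $v_i$ is the smallest-indexed upper node in the clique). Since $\tilde f_{pq}^{E}(G)=\sum_i\tilde f_i$, $\tilde f_{pq}^{K}(G)=\sum_i\tilde f_i$, and $f_{pq}(G)=\sum_i f_i^{\mathrm{true}}$, linearity of expectation then promotes both per-user identities to the two equalities asserted in the theorem.

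For Algorithm \ref{alg:algorithm 1}, I would write $f_i=\sum_C\prod_{(u,l)}a'_{u,l}$ with $C$ ranging over candidate $(p,q)$-cliques whose $q$ edges incident to $v_i$ are truly present, the product ranging over the remaining $(p-1)q$ off-center edges. Warner's RR gives $E[a'_{u,l}]=(1-2\mu)a_{u,l}+\mu$, and independence of the flips across distinct edges lets the expectation of the product factor as
\[
\prod_{(u,l)}\bigl((1-2\mu)a_{u,l}+\mu\bigr).
\]
Expanding this and summing over $C$ separates a leading $(1-2\mu)^{(p-1)q}f_i^{\mathrm{true}}$ term from a residual that must be identified with $\mu\,s_i$, where $s_i$ is the near-$(p,q)$-clique count returned by \textit{CalcPQClique1}. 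Substituting into the line-10 formula then yields $E[\tilde f_i]=f_i^{\mathrm{true}}$.

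For Algorithm \ref{alg:algorithm 2} the same pattern applies, but with the $q$-stars indicator in place of the single-edge indicator: a $(p,q)$-clique centered at $v_i$ on lower-set $L$ corresponds to each of the other $p-1$ upper users owning the matching $q$-stars, so each candidate contributes a product of only $p-1$ independent noisy bits. Factoring $E\bigl[\prod_u KS'_{u,L}\bigr]=\prod_u\bigl((1-2\mu)KS_{u,L}+\mu\bigr)$, summing over candidates, and identifying the residual with $\mu$ times the $(p-1,q)$-clique count returned by \textit{CalcPQClique2} yields the analogous per-user identity, giving $E[\tilde f_{pq}^{K}(G)]=f_{pq}(G)$ after summation.

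The main obstacle is the combinatorial bookkeeping that identifies the residual cross-terms with the $s_i$ actually computed by \textit{CalcPQClique1} and \textit{CalcPQClique2}. The two expansions produce $2^{(p-1)q}$ and $2^{p-1}$ subset monomials at several intermediate orders in $(1-2\mu)$, and one must argue carefully that after summation over all candidate cliques centered at $v_i$ these collapse to exactly the single near-clique count emitted by the subroutine; the precise definitions of ``near $(p,q)$-clique'' and ``$(p-1,q)$-clique'' used by \textit{CalcPQClique1}/\textit{CalcPQClique2} play the pivotal role here. Once this collapse is verified, both unbiasedness statements follow by routine substitution into the correction formulas on lines 10 and 13 and a final application of linearity of expectation across the $n$ center users.
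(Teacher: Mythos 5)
Your overall route is the same one the paper takes: fix the center user $v_i$, write $f_i$ as a sum over candidate cliques of a product of independent noisy indicators, use $\mathbb{E}[a']=(1-2\mu)a+\mu$ (resp. $\mathbb{E}[KS']=(1-2\mu)KS+\mu$) to factor the expectation, correct with $\mu s_i$, rescale by $(1-2\mu)^{-(p-1)q}$ (resp. $(1-2\mu)^{-(p-1)}$), and finish with linearity of expectation over the $n$ users. The problem is that the step you defer --- the residual ``must be identified with $\mu s_i$'', to be checked later --- is not peripheral bookkeeping; it is the entire mathematical content of the theorem, and it is exactly what the paper's proof consists of. The paper expands $\mathbb{E}(f_i)$ and $\mathbb{E}(s_i)$ in full as polynomials in $(1-2\mu)$ and $\mu$, one monomial per sub-configuration of the $(p-1)q$ noisy edges (resp. $p-1$ noisy stars), subtracts, and reads off $\frac{1}{(1-2\mu)^{(p-1)q}}\mathbb{E}(f_i-\mu s_i)=\sum\prod a_{i_l j_r}$ together with its $k$-stars analogue, after which the summation over $i$ is as routine as you say. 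A writeup that stops at ``once this collapse is verified'' has therefore not proved unbiasedness: it has restated the claim that needs proof.

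Moreover, the collapse genuinely needs the care you suspect, so it cannot be waved through. Per candidate clique there are $m=(p-1)q$ (edge LDP) or $m=p-1$ ($k$-stars LDP) independent noisy bits, and subtracting $\mu$ times a near-clique count whose product omits one noisy factor only telescopes that single factor: writing the noisy bits as $x_1,\dots,x_m$, one gets $\mathbb{E}(f_i)-\mu\mathbb{E}(s_i)=\sum (1-2\mu)x_m\prod_{j<m}\bigl((1-2\mu)x_j+\mu\bigr)$, which for $m\ge 2$ still carries lower-order monomials separating it from the target $(1-2\mu)^m\prod_{j}x_j$; only for $m=1$ is the identity immediate. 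So whether the correction in lines 10/13 is exactly unbiased hinges on precisely what \textit{CalcPQClique1} and \textit{CalcPQClique2} return as $s_i$ and on how those lower-order terms are absorbed --- this is where the theorem is actually won or lost. To match the paper you would need to pin down those definitions, write out both expansions explicitly, and carry the cancellation through, rather than asserting that the cross-terms collapse after summation over candidates.
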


\begin{proof}
Let $\tilde{f}_{pq}^{E}(G)$ be the output of Algorithm \ref{alg:algorithm 1}, $f_i$ and $s_i$ be the noisy count of $(p,q)$-clique and near $(p,q)$-clique, $i=i_1$ be the index of user $v_i$. We assume that $1\le j_1\le j_2\le \cdots \le j_q\le n$ be the indexes of user $v_{j_{1}}$ to user $v_{j_{q}}$. For $f_i$ and $s_i$, we have:
$$
\begin{aligned}
\mathbb{E}(f_i)&=\sum_{1\le j_1\le j_2\le \cdots \le j_q\le n} \mathbb{E}(a_{i_{1}j_{1}}\cdots a_{i_{1}j_{q}}a'_{i_{2}j_{1}}\cdots a'_{i_{p}j_{q}})\\
&=\sum_{a_{i_{1}j_{1}}=\cdots =a_{i_{1}j_{q}}=1} \mathbb{E}(\prod_{2\le l\le p,1\le r\le q} a'_{i_{l}j_{r}})\\
\end{aligned}
$$
$$
\begin{aligned}
\mathbb{E}(s_i)&=\sum_{1\le j_1\le j_2\le \cdots \le j_q\le n} \mathbb{E}(a_{i_{1}j_{1}}\cdots a_{i_{1}j_{q}}a'_{i_{2}j_{1}}\cdots a'_{i_{p}j_{q-1}})\\
&=\sum_{a_{i_{1}j_{1}}=\cdots =a_{i_{1}j_{q}}=1} \mathbb{E}(\frac{\prod_{2\le l\le p,1\le r\le q} a'_{i_{l}j_{r}}}{a^{'}_{i_{p}j_{q}}})\\
\end{aligned}
$$
From RR, we know that:
$$
\begin{aligned}
\textbf{a}' & =(1-\mu )\textbf{a}+\mu (\textbf{1}-\textbf{a})\\
& =(1-2\mu)\textbf{a}+\mu\cdot \textbf{1}
\end{aligned}
$$
Thus:
$$
\begin{aligned}
\mathbb{E}(f_i)&=\sum_{a_{i_{1}j_{1}}=\cdots =a_{i_{1}j_{q}}=1} \prod_{2\le l \le p,1\le r\le q}((1-2\mu)a_{i_{l}j_{r}}+\mu)\\
&=\sum_{a_{i_{1}j_{1}}=\cdots =a_{i_{1}j_{q}}=1} ((1-2\mu)^{(p-1)q}\prod_{2\le l\le p,1\le r\le q}a_{i_{l}j_{r}}\\
		&+(1-2\mu)^{(p-1)q-1}\mu(\sum_{2\le l_{1}\le p, 1\le r_{1}\le q}\frac{\prod_{2\le l\le p,1\le r\le q}a_{i_{l}j_{r}}}{a_{l_{1}r_{1}}})\\
		&+\cdots +(1-2\mu)\mu^{(p-1)q-1}(\sum_{2\le l\le p,1\le r\le q}a_{i_{l}j_{r}})+\mu^{(p-1)q})\\
\end{aligned}
$$
$$
\begin{aligned}
\mathbb{E}(s_i)&=\sum_{a_{i_{1}j_{1}}=\cdots =a_{i_{1}j_{q}}=1} \frac{\prod_{2\le l \le p,1\le r\le q}((1-2\mu)a_{i_{l}j_{r}}+\mu)}{(1-2\mu)a_{i_{p}j_{q}}+\mu}\\
&=\sum_{a_{i_{1}j_{1}}=\cdots =a_{i_{1}j_{q}}=1} ((1-2\mu)^{(p-1)q-1}\prod_{2\le l\le p,1\le r\le q}a_{i_{l}j_{r}}\\
		&+(1-2\mu)^{(p-1)q-2}\mu(\sum_{2\le l_{1}\le p, 1\le r_{1}\le q}\frac{\prod_{2\le l\le p,1\le r\le q}a_{i_{l}j_{r}}}{a_{l_{1}r_{1}}})\\
		&+\cdots +(1-2\mu)\mu^{(p-1)q-2}(\sum_{2\le l\le p,1\le r\le q}a_{i_{l}j_{r}})+\mu^{(p-1)q-1})\\
\end{aligned}
$$
We notice that $\frac{1}{(1-2\mu)^{(p-1)q}}(f_i-\mu s_i)=\sum_{a_{i_{1}j_{1}}=\cdots =a_{i_{1}j_{q}}=1}\\
\prod_{2\le l\le p,1\le r\le q}a_{i_{l}j_{r}}$, which is the number of $(p,q)$-clique centering on user $v_i$. Thus, we have:
$$
\begin{aligned}
\mathbb{E}(\tilde{f}_{pq}^{E}(G))&=\sum_{i=1}^n \mathbb{E}(\tilde{f}_i)\\
&=\sum_{i=1}^n \frac{1}{(1-2\mu)^{(p-1)q}}\mathbb{E}(f_i-\mu s_i)\\
&=\sum_{i=1}^n \sum_{a_{i_{1}j_{1}}=\cdots =a_{i_{1}j_{q}}=1} \prod_{2\le l\le p,1\le r\le q}a_{i_{l}j_{r}}\\
&=f_{pq}(G)\\
\end{aligned}
$$
Thus, the unbiasedness of Algorithm \ref{alg:algorithm 1} is proved. 

Similarly, let $\tilde{f}_{pq}^{K}(G)$ be the output of Algorithm \ref{alg:algorithm 2}, $f_i$ and $s_i$ be the noisy count of $(p,q)$-clique and $(p-1,q)$-clique, $i=i_1$ be the index of user $v_i$. For $1\le l\le p$, $\textbf{KS}_{l}$ represents the $l$-th $q$-stars neighboring list. For $f_i$ and $s_i$, we have:
$$
\begin{aligned}
\mathbb{E}(f_i)&=\sum_{\textbf{KS}_{1}=1} ((1-2\mu)^{p-1}\prod_{2\le l\le p}\textbf{KS}_{l}\\
&+(1-2\mu)^{p-2}\mu(\sum_{2\le l_{1}\le p}\frac{\prod_{2\le l\le p}\textbf{KS}_{l}}{\textbf{KS}_{l_{1}}})\\
		&+\cdots +(1-2\mu)\mu^{p-2}(\sum_{2\le l\le p}\textbf{KS}_{l})+\mu^{p-1})\\
\end{aligned}
$$
$$
\begin{aligned}
\mathbb{E}(s_i)&=\sum_{\textbf{KS}_{1}=1} ((1-2\mu)^{p-2}\prod_{2\le l\le p-1}\textbf{KS}_{l}\\
&+(1-2\mu)^{p-3}\mu(\sum_{2\le l_{1}\le p-1}\frac{\prod_{2\le l\le p-1}\textbf{KS}_{l}}{\textbf{KS}_{l_{1}}})\\
		&+\cdots +(1-2\mu)\mu^{p-3}(\sum_{2\le l\le p-1}\textbf{KS}_{l})+\mu^{p-2})\\
\end{aligned}
$$
Namely, $\frac{1}{(1-2\mu)^{p-1}}(f_i-\mu s_i)=\sum_{\textbf{KS}_{1}=1} \prod_{2\le l\le p}\textbf{KS}_{l}$. Thus:
$$
\begin{aligned}
\mathbb{E}(\tilde{f}_{pq}^{K}(G))&=\sum_{i=1}^n \mathbb{E}(\tilde{f}_i)\\
&=\sum_{i=1}^n \frac{1}{(1-2\mu)^{p-1}}\mathbb{E}(f_i-\mu s_i)\\
&=\sum_{i=1}^n \sum_{\textbf{KS}_{1}=1} \prod_{2\le l\le p}\textbf{KS}_{l}\\
&=f_{pq}(G)\\
\end{aligned}
$$
As a result, Algorithm \ref{alg:algorithm 2} is unbiased.
\end{proof}

\noindent
\textbf{Utility.} The utility of a randomized algorithm, measured by the expected $L_{2}$ loss $L_{2}^{2}(f_{pq}(G),\tilde{f}_{pq}(G))$, is composed of the bias and variance, as per the bias-variance decomposition theorem. Since both algorithm \ref{alg:algorithm 1} and algorithm \ref{alg:algorithm 2} are proved unbiased by Theorem \ref{theo:unbiasedness}, the expected $L_{2}$ loss is equivalent to the upper-bound of two algorithms’ variance. This is given by the following lemma and theorem:

\begin{lemma}
\label{lemma:lem}
Let $c_i=\prod_{1\le r\le q}a_{ij_{r}}$, $N_{2,q}(G),N_{1,q}(G)$ be the numbers of (2,q)-clique and (1,q)-clique in graph G, respectively. Then,
$$
\begin{aligned}
\sum_{1\le i,j_1,j_2,\cdots ,j_q\le n} c_{i}^2 \le 2N_{2,q}(G)+N_{1,q}(G)\\
\end{aligned}
$$
\end{lemma}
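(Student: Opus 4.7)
The plan is to interpret the left-hand side combinatorially and bound it using the elementary identity $N^2 = N(N-1) + N$, where for fixed $(j_1,\ldots,j_q)$ I let $N = N(j_1,\ldots,j_q)$ denote the number of vertices $v_i$ adjacent to each of $v_{j_1},\ldots,v_{j_q}$. Since $a_{i j_r}\in\{0,1\}$, the indicator $c_i=\prod_{r} a_{i j_r}$ is also $0/1$-valued, so $c_i^2 = c_i$; summing $c_i$ over $i$ then gives exactly $N(j_1,\ldots,j_q)$, which is what makes the squared notation on the left-hand side naturally correspond to a sum of $N^2$ over the $j$-tuple.

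First I would rewrite the left-hand side as $\sum_{j_1,\ldots,j_q} N(j_1,\ldots,j_q)^2$ and then split it as $\sum N(N-1) + \sum N$ via the identity above. The cross term $N(N-1)$ counts, for each fixed $(j_1,\ldots,j_q)$, ordered pairs $(i_1,i_2)$ of distinct common neighbors; every such pair together with $(v_{j_1},\ldots,v_{j_q})$ is a $(2,q)$-clique whose two central $p$-side vertices have been ordered, so summing over the $j$-tuple produces $2N_{2,q}(G)$. The diagonal term $N$ just counts single common neighbors, i.e.\ $(1,q)$-stars, and summing over the $j$-tuple yields $N_{1,q}(G)$.

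Combining the two pieces gives the right-hand side, and the inequality (rather than equality) accommodates edge cases such as repeated indices $j_r = j_{r'}$ or vertices drawn from the wrong bipartite layer, which contribute $0$ to the left-hand side but may be double-counted by the unrestricted sum on the right.

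The main obstacle I expect is the bookkeeping of orderings and multiplicities: making sure the factor of $2$ in front of $N_{2,q}(G)$ emerges precisely from the ordered-pair count $i_1 \neq i_2$, and that the convention under which $N_{p,q}(G)$ is defined (unordered $p$-side, unordered $j$-tuple, or otherwise) lines up cleanly with the way the left-hand side sums over indices. Once this correspondence is pinned down, the decomposition $N^2 = N(N-1) + N$ yields the bound immediately.
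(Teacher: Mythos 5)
Your proof is essentially the paper's own argument: both rest on the decomposition $N^2 = N + N(N-1)$ (the paper writes it as $c^2 = c + 2\binom{c}{2}$), identifying the linear term with $N_{1,q}(G)$ and the pair term with $2N_{2,q}(G)$ because two $(1,q)$-cliques sharing the same $q$-set of leaves form a $(2,q)$-clique. Your charitable reading of the left-hand side (squaring the common-neighbor count for each fixed $j$-tuple rather than the individual $0/1$ indicators, which would make the square vacuous) is also exactly the reading the paper's proof implicitly adopts.
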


\begin{theorem}
\label{theo:utility}

Let $\epsilon\in \mathbb{R}$, $\mu=\frac{1}{e^{\epsilon}+1}$ be the flip probability in RR, $f_{pq}(G)$ be the true number of $(p,q)$-clique in graph G, $\tilde{f}_{pq}^{E}(G)$ and $\tilde{f}_{pq}^{K}(G)$ be the estimation output of algorithm \ref{alg:algorithm 1} and algorithm \ref{alg:algorithm 2}, respectively. Let $S=2N_{2,q}(G)+N_{1,q}(G)$ be the coefficient, where $N_{2,q}(G)$ and $N_{1,q}(G)$ are the numbers of $(2,q)$-clique and $(1,q)$-clique in graph G, respectively. Then:
\begin{align}
V(\tilde{f}_{pq}^{E}(G)) &\le \frac{\mu(1-\mu)}{(1-2\mu)^{2(p-1)q}}((p-1)q-((p-1)q+1)\mu^{2})S,\tag{5}
\end{align}
\begin{align}
V(\tilde{f}_{pq}^{K}(G)) \le \frac{\mu(1-\mu)}{(1-2\mu)^{2(p-1)}}((p-1)-p\mu^{2})S.\tag{6}
\end{align}
\end{theorem}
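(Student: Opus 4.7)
The plan is to exploit the unbiasedness established in Theorem~\ref{theo:unbiasedness}, which identifies the expected $L_2$ loss with the variance of each estimator, and then to expand each variance as a double sum over candidate $(p,q)$-cliques. Concretely, I would rewrite $\tilde{f}_{pq}^{E}(G) = (1-2\mu)^{-(p-1)q}\sum_{C} g_C$, where $C$ ranges over tuples $(i_1,\dots,i_p,j_1,\dots,j_q)$ with $a_{i_1 j_r}=1$ for every $r$ and $g_C$ is built from $(p-1)q$ independent RR-flipped indicators, and rewrite $\tilde{f}_{pq}^{K}(G) = (1-2\mu)^{-(p-1)}\sum_{C} h_C$ with $h_C$ a function of $p-1$ independent noisy $q$-stars per candidate. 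Since each noisy $q$-star replaces $q$ noisy edges, the normalizer shrinks from $(1-2\mu)^{2(p-1)q}$ to $(1-2\mu)^{2(p-1)}$, which is precisely the source of the utility gap between (5) and (6).

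The first step is the diagonal (per-candidate) contributions. For a fixed candidate $C$, I would compute $V(g_C)$ using $\mathrm{Var}(a'_{ij}) = \mu(1-\mu)$ together with the independence of distinct noisy indicators. Squaring the expansion used in Theorem~\ref{theo:unbiasedness} and subtracting the square of the mean produces a binomial-type polynomial in $\mu$ and $1-2\mu$; collecting terms yields the coefficient $(p-1)q - ((p-1)q+1)\mu^2$ in the edge LDP case and $p-1 - p\mu^2$ in the $k$-stars case, matching what appears in (5) and (6).

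The second step is the cross-candidate (off-diagonal) contributions. Two candidates $C, C'$ have zero covariance when their noisy variables are disjoint, so only ordered pairs with at least one common noisy edge (respectively, one common noisy $q$-star) survive. Classifying these pairs by the overlap of their lower vertex sets, the dominant contribution comes from pairs sharing a common upper center $i_1$ together with the same $q$-tuple $(j_1,\dots,j_q)$; this class contributes a count proportional to $\sum_{i,j_1,\dots,j_q} c_i^2$ with $c_i = \prod_r a_{ij_r}$. Applying Lemma~\ref{lemma:lem} bounds this sum by $S = 2N_{2,q}(G) + N_{1,q}(G)$. Multiplying the diagonal coefficient by $S$ and dividing by the square of the normalizer gives (5); the parallel computation with the $k$-stars normalizer and with $p-1$ replacing $(p-1)q$ gives (6).

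The main obstacle I anticipate is the bookkeeping in the off-diagonal expansion: one has to rule out that more intricate overlap patterns, pairs that share only some of the lower vertices or only a subset of noisy variables, produce a contribution larger than the shared-center case covered by the lemma, and one must verify that the cross terms arising from the two-piece structure of the estimator ($f_i$ minus $\mu s_i$) do not spoil the coefficient of the diagonal term. Once this classification is pinned down, the remainder is algebraic simplification parallel to the unbiasedness proof.
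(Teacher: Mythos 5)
Your overall skeleton does match the paper's: both routes reduce the expected $L_2$ loss to the variance via Theorem~\ref{theo:unbiasedness}, expand the estimator as a sum over candidate cliques of products of independently RR-perturbed indicators, and invoke Lemma~\ref{lemma:lem} to bound the resulting combinatorial sum by $S=2N_{2,q}(G)+N_{1,q}(G)$. However, the two steps you defer are exactly where the stated coefficients are produced, and the paper fills them with a device your plan does not contain. The paper works per user and writes $\mathbb{V}(\tilde f_i)=\frac{1}{(1-2\mu)^{2(p-1)}}\bigl(\mathbb{V}(f_i)+\mu^2\mathbb{V}(s_i)-2\mu\,\mathrm{Cov}(f_i,s_i)\bigr)$, bounds $\sum_i\mathbb{V}(f_i)$ and $\sum_i\mathbb{V}(s_i)$ through Lemma~\ref{lemma:lem}, and, crucially, exploits the coupling $f_i=\textbf{KS}'_l\,s_i$ to derive $\mathrm{Cov}(f_i,s_i)=\mu\,\mathbb{V}(s_i)$; it is the partial cancellation of this covariance against the $+\mu^2\mathbb{V}(s_i)$ term that yields the negative pieces $-p\mu^2$ in (6) and $-((p-1)q+1)\mu^2$ in (5). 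Your proposal instead attributes these coefficients to a per-candidate ``diagonal'' variance of a product of independent noisy indicators; that computation alone cannot produce them, because the negative part comes from the dependence between the clique count $f_i$ and the corrective count $s_i$ (they share all their noisy factors), which you flag only as a possible nuisance at the end rather than as the ingredient that must be used.

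The off-diagonal classification you postpone is also misidentified. Pairs of candidates with the same upper center and the same lower $q$-tuple contribute only $\sum c_i$ (since $c_i\in\{0,1\}$ one has $c_i^2=c_i$), i.e., the $N_{1,q}(G)$ part of $S$; the $2N_{2,q}(G)$ part of Lemma~\ref{lemma:lem} is generated by pairs of candidates sharing the lower $q$-tuple but having \emph{different} upper vertices, which is what the identity $\sum c^2=\sum c+2\sum\binom{c}{2}$ in the lemma's proof encodes (with $c$ the number of upper vertices adjacent to a fixed $q$-tuple), and it is the mechanism by which correlations among candidates and users that reuse the same noisy $q$-stars (or noisy edges) are absorbed. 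So declaring the shared-center, shared-tuple class ``dominant'' and hoping to rule the rest out points in the wrong direction: those other pairs must be kept, and they are precisely what the $2N_{2,q}(G)$ term is for. Until you carry out this classification and establish the $f_i$-versus-$s_i$ covariance identity (or an equivalent), the outline does not yet deliver inequalities (5) and (6); to be fair, the paper itself is cavalier here (it pushes $\mathbb{V}$ through the candidate sum and treats $\sum_i\tilde f_i$ as a sum of uncorrelated terms), but its argument at least contains the covariance identity and the correct role of Lemma~\ref{lemma:lem}, which are the elements missing from your plan.
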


\begin{proof}
We start with the proof of the lemma:\\
\noindent
\textbf{The proof of the lemma \ref{lemma:lem}.} We notice that $\sum_{1\le i\le n} c_{i}^{2}=\sum_{1\le i,j_1,j_2,\cdots ,j_q\le n} (c_i+c_i(c_i-1))$. The geometric meaning of left side is the number of (1,q)-clique in graph G. The right side of the equation, denoted as $\sum_{1\le i,j_1,j_2,\cdots ,j_q\le n}\\ c_i(c_i-1)=\sum_{1\le i,j_1,j_2,\cdots ,j_q\le n} 2\tbinom{c_i}{2}$, is the twice of the number of (2,q)-clique in graph G, since the number of (2,q)-clique is the  number of randomly and non-repetitively selecting two (1,q)-clique. Thus, the lemma is proved. 

The proof of the upper-bound of the variance of Algorithm \ref{alg:algorithm 1} is similar to that of Algorithm \ref{alg:algorithm 2}. We omit the proof of Algorithm \ref{alg:algorithm 1} for simplicity.

\noindent
\textbf{The upper-bound of the variance of Algorithm 2.} From the unbiased correction of Algorithm 2, we have:
$$
\begin{aligned}
\mathbb{V}(\tilde{f}_i)&=\mathbb{V}(\frac{1}{(1-2\mu)^{p-1}}(f_i-\mu s_i))\\
&=\frac{1}{(1-2\mu)^{2(p-1)}}(\mathbb{V}(f_i)+\mu^{2}\mathbb{V}(s_i)-2\mu Cov(f_i,s_i))\\
\end{aligned}
$$
Let $1\le j_1,j_2,\cdots ,j_q\le n$ be the indexes of q arbitrary users. From the left side of the variance, we have:
$$
\begin{aligned}
\mathbb{V}(f_i)&=\mathbb{V}(\sum_{1\le j_1,j_2,\cdots ,j_q\le n}(\textbf{KS}_1)\cdot (\prod_{2\le l\le p}\textbf{KS}'_{l}))\\
&=\sum_{1\le j_1,j_2,\cdots ,j_q\le n}(\prod_{1\le r\le q}a_{ij_{r}})^{2}\mathbb{V}(\prod_{2\le l\le p}\textbf{KS}'_{l})\\
&=\sum_{1\le j_1,j_2,\cdots ,j_q\le n}c_{i}^{2}\mathbb{V}(\prod_{2\le l\le p}\textbf{KS}'_{l})\\
\end{aligned}
$$
We notice that $\prod_{2\le l\le p}\textbf{KS}'_{l}$ is a (p-1)-Bernoulli random variable. Thus, $\mathbb{V}(\prod_{2\le l\le p}\textbf{KS}'_{l})=(p-1)\mu(1-\mu)$. Then, with the Lemma \ref{lemma:lem}, we can obtain the inequality below:
$$
\begin{aligned}
\sum_{1\le i\le n} \mathbb{V}(f_i)\le (p-1)\mu(1-\mu)(2N_{2,q}(G)+N_{1,q}(G))\\
\end{aligned}
$$
Similarly,
$$
\begin{aligned}
\sum_{1\le i\le n} \mu^{2}\mathbb{V}(s_i)\le ((p-2)\mu^3(1-\mu)(2N_{2,q}(G)+N_{1,q}(G))\\
\end{aligned}
$$
As for the right side of the variance, we have:
$$
\begin{aligned}
2\mu Cov(f_i,s_i)&=2\mu \mathbb{E}((f_i-\mathbb{E}(f_i))(s_i-\mathbb{E}(s_i)))\\
\end{aligned}
$$
We notice that:
$$
\begin{aligned}
f_i&=\textbf{KS}^{'}_{l}s_i\\
\mathbb{E}(f_i)&=\mathbb{E}(\textbf{KS}^{'}_{l}s_i)\\
&=\mu \mathbb{E}(s_i)\\
\end{aligned}
$$
Then, we have:
$$
\begin{aligned}
2\mu Cov(f_i,s_i)&=2\mu \mathbb{E}((\textbf{KS}^{'}_{l}s_i-\mu \mathbb{E}(s_i))(s_i-\mathbb{E}(s_i)))\\
&=2\mu^2 \mathbb{E}(s_i-\mathbb{E}(s_i))^2\\
&=2\mu^2 \mathbb{V}(s_i)\\
\end{aligned}
$$
With the equation of $\mathbb{V}(s_i)$ above, we have:
$$
\begin{aligned}
\sum_{1\le i\le n} 2\mu Cov(f_i,s_i)&\le 2(p-2)\mu^3(1-\mu)(2N_{2,q}(G)+N_{1,q}(G))\\
\end{aligned}
$$
Combining the upper-bound of $\sum_{1\le i\le n}\mathbb{V}(f_i),\sum_{1\le i\le n}\mathbb{V}(s_i)$ and $\sum_{1\le i\le n}2\mu Cov(f_i,s_i)$ together, we have:
$$
\begin{aligned}
\mathbb{V}(\tilde{f}_{pq}^{K}(G))&=\mathbb{V}(\sum_{i=1}^{n} \tilde{f}_i)\\
&=\sum_{i=1}^{n} \mathbb{V}(\tilde{f}_i)\\
&=\frac{1}{(1-2\mu)^{2(p-1)}} \sum_{i=1}^{n}(\mathbb{V}(f_i)+\mu^{2}\mathbb{V}(s_i)-2\mu Cov(f_i,s_i))\\
&\le \frac{\mu(1-\mu)}{(1-2\mu)^{2(p-1)}} ((p-1)-p\mu^2)(2N_{2,q}(G)+N_{1,q}(G))\\
\end{aligned}
$$
Thus, Theorem \ref{theo:utility} is proved.
\end{proof}

It can be seen that the left coefficient of the upper-bound of two algorithms’ variance dominates the change. When $k=q$, algorithm \ref{alg:algorithm 2} can improve the utility by $\frac{1}{(1-2\mu)^{2(p-1)(q-1)}}$, while it only sacrifices a relative small privacy by $\frac{q-1}{2}\epsilon$.\\

\noindent
\textbf{The highlights of our $k$-stars LDP algorithm.} Our proposed $k$-stars LDP algorithm has a better utility than traditional edge LDP algorithm. Moreover, when the target subgraph becomes more and more complex (i.e. p, q become larger and larger), it becomes more and more efficient to trade privacy for utility, which means that the proposed $k$-stars LDP algorithm can handle complex subgraph enumeration problems better than traditional edge LDP algorithm.

\section{Experiments}
\label{sec:exp}
In this section, we first describe the datasets used in the experiments. Then, the details of experiments are introduced, as well as the analysis.

\subsection{Datasets}	
The algorithms in this paper are evaluated on four real datasets: Gplus, IMDB, GitHub and Facebook, as introduced below. 

The Gplus dataset (namely Google+ dataset) is a dataset constructed by Google, with the information collected from chrome users, to analyze users’ social network. In this dataset, a node represents a user and an edge represents the social relationship between two users. Then, a bipartite graph $G=(V=(U,L),E)$ can be constructed with 107,614 nodes and 12,238,285 edges, as is shown in Table \ref{tab:dataset}.

The IMDB dataset (namely Internet Movie Database) is a dataset about the relationship between movies and their actors. In this dataset, a node represents an actor/actress, and an edge represents a movie played by these two actors/actresses together. Then, a bipartite graph $G=(V=(U,L),E)$ can be constructed with 896,308 nodes and 57,064,385 edges, as is shown in Table \ref{tab:dataset}.

The GitHub dataset is a dataset about the relationship between the GitHub users and the repositories they are watching. In this dataset, there are two kinds of nodes: user and repository. The edges between these two kinds of nodes represent the relationship of users watching repositories. A bipartite graph $G=(V=(U,L),E)$ can be constructed with 177,386 nodes and 440,237 edges, as is shown in Table \ref{tab:dataset}.

The Facebook dataset is a dataset about the social network of users using Facebook. In this dataset, an node represents an user and an edge represents the friendship between users. A bipartite graph $G=(V=(U,L),E)$ can be constructed with 63,732 nodes and 1,545,686 edges, as is shown in Table \ref{tab:dataset}.
\begin{table}[h!t]
\caption{The datasets employed in experiments.}
\centering
\begin{tabular}{@{}ccc@{}}
\toprule
Dataset & Node  & Edge \\ \midrule
Gplus   & 107,614  & 12,238,285   \\
IMDB   & 896,308 & 57,064,385   \\
GitHub   & 177,386 & 440,237   \\
Facebook   & 63,732 & 1,545,686   \\ \bottomrule
\end{tabular}
\label{tab:dataset}
\end{table}
\subsection{Experimental Results}
\textbf{Performance comparison.} We first compare the performance of our proposed $k$-stars LDP algorithm for $(p,q)$-clique enumeration to that of the traditional edge LDP algorithm. Fig. \ref{fig:performance} shows the relative errors of the two algorithms when $n=10^4$, $\epsilon=0.1$, $k=p=q=2$.
\begin{figure}[htbp]
	\centering
	 \includegraphics[scale=0.2]{./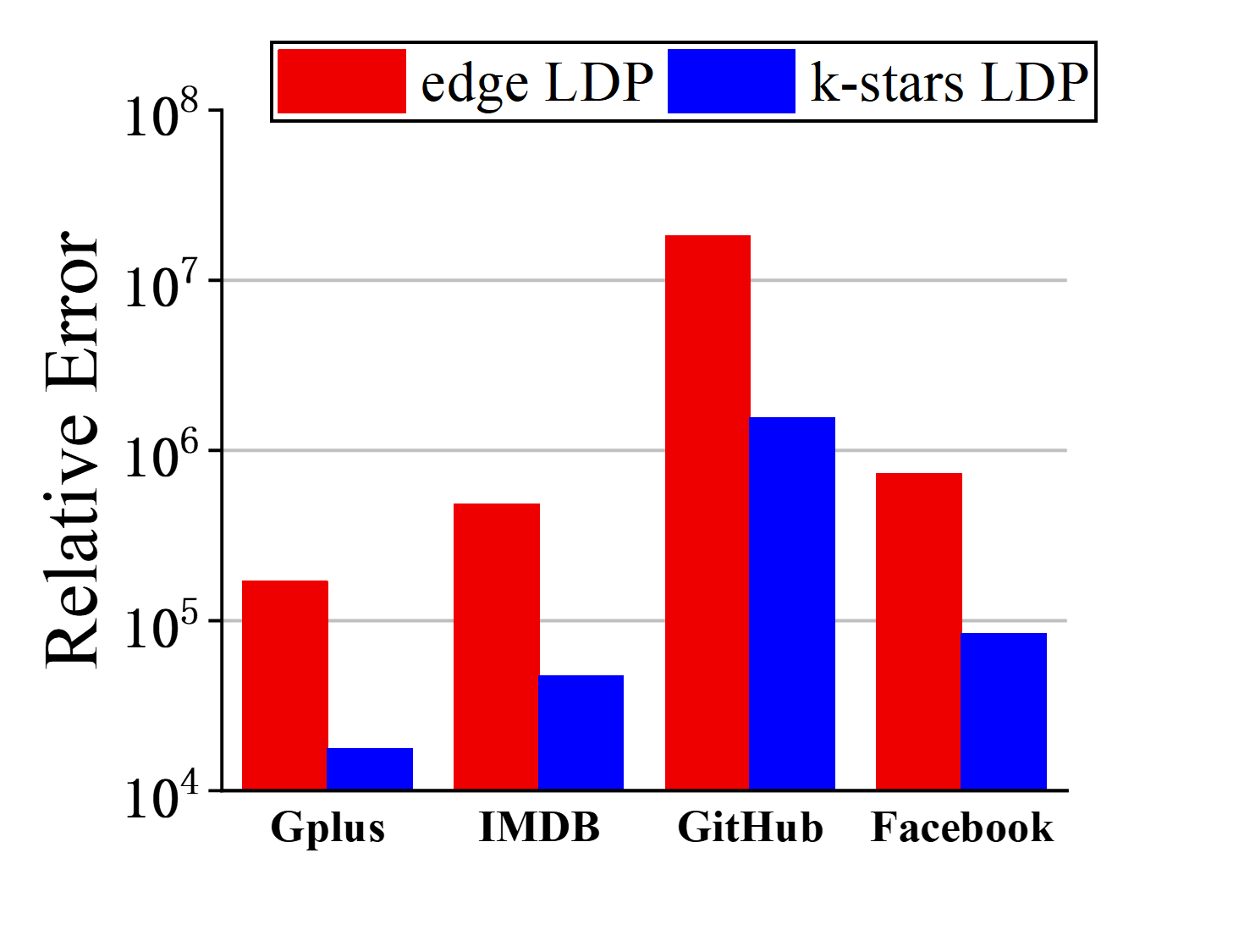}
	\caption{Relative error of edge LDP and $k$-stars LDP for $(p,q)$-clique enumeration ($n=10^4,\epsilon =0.1,k=p=q=2$).}
    \label{fig:performance}
\end{figure}
It can be seen from Fig. \ref{fig:performance} that our $k$-stars LDP algorithm for $(p,q)$-clique enumeration outperforms the traditional edge LDP algorithm. This illustrates the effectiveness of $k$-stars LDP on the optimization of utility. With the $2$-stars neighboring list, users needs only to add noise once instead of twice, compared to traditional edge LDP. The required noise of users is reduced, and thus the accuracy performance is greatly improved. Meanwhile, as per Theorem \ref{theo:privacy}, $\epsilon$-2-stars LDP algorithm also provides $\frac{4}{3}\epsilon$-edge LDP, which means it only sacrifices a relative small privacy budget compared to the traditional edge LDP algorithm.\\

\noindent
\textbf{Changing the number of nodes.} The performances of our $k$-stars LDP algorithm and traditional edge LDP algorithm are evaluated when the number of users is changed. We randomly select certain amount of users and build subgraphs from the original graph G in Gplus, IMDB, GitHub and Facebook. Then, the relative errors of the two algorithms are evaluated on these generated subgraphs. We set $\epsilon=0.1$, $k=p=q=2$, and the results of relative error and $L_2$ loss are shown in Fig. \ref{fig:re_changing-n} and Fig. \ref{fig:l2_changing-n}.
\begin{figure}[htbp]
	\centering
	\includegraphics[scale=0.37]{./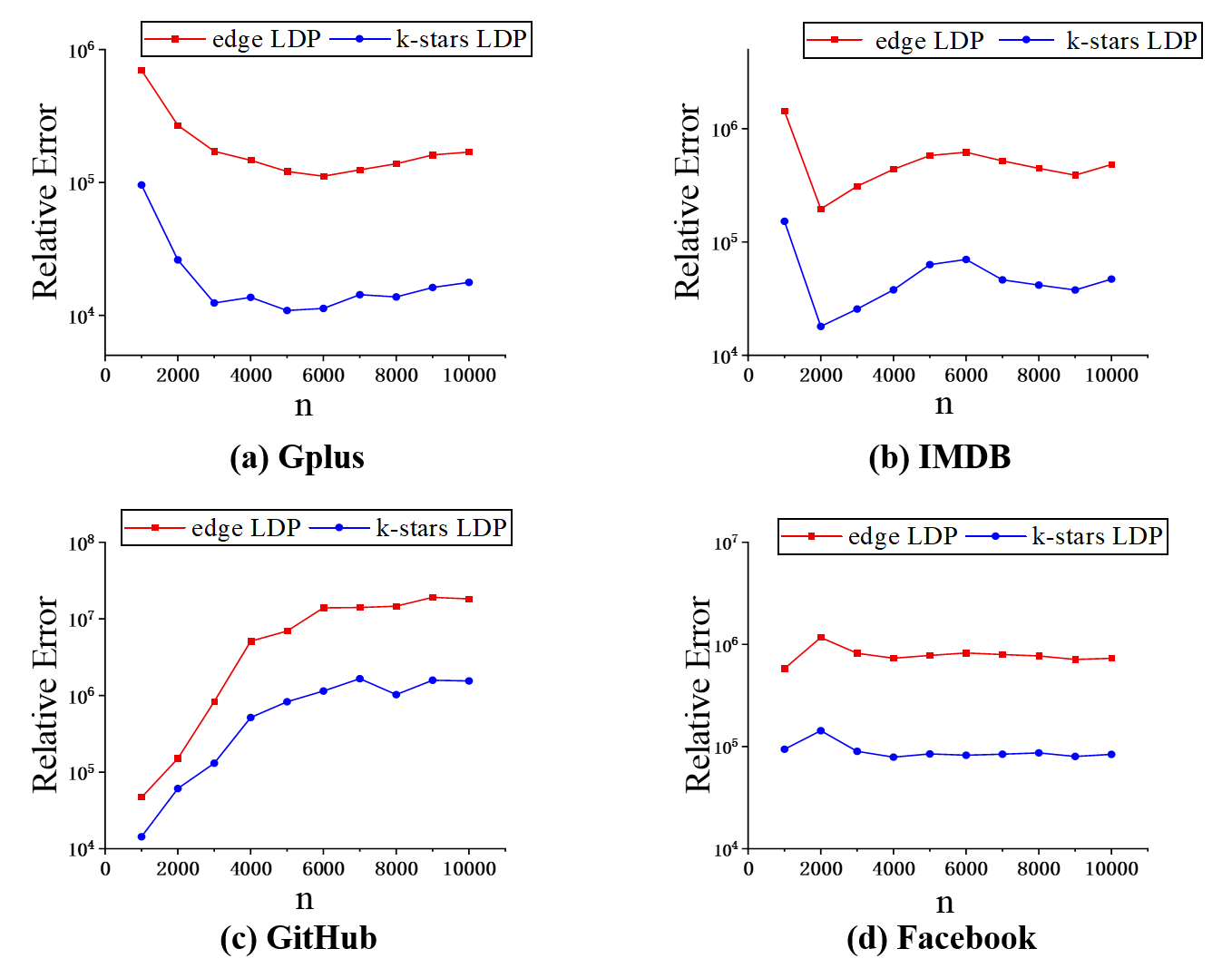}
	\caption{Relative error of edge LDP and $k$-stars LDP with various n ($\epsilon =0.1,k=p=q=2$).}
    \label{fig:re_changing-n}
\end{figure}
\begin{figure}[htbp]
	\centering
	\includegraphics[scale=0.37]{./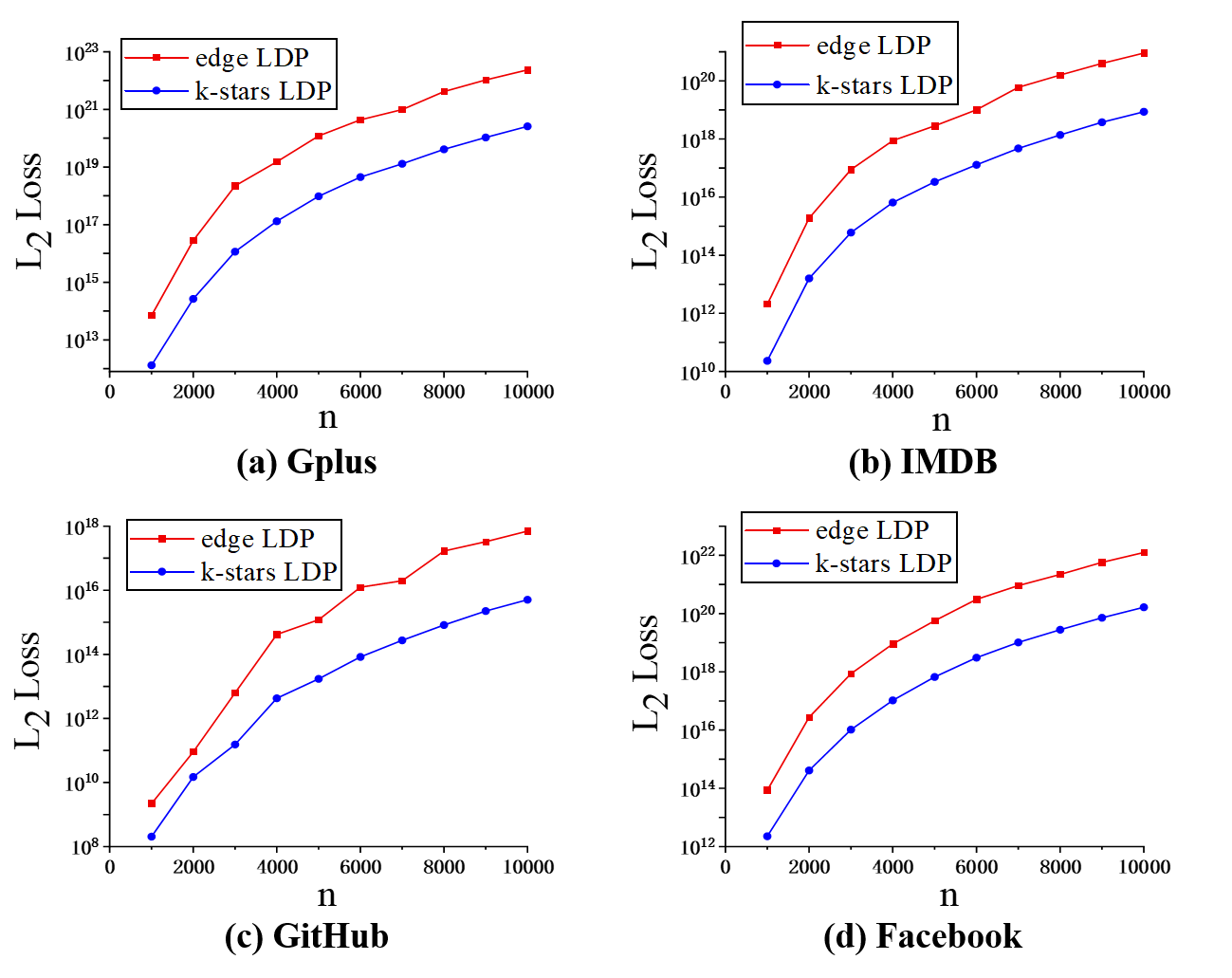}	
	\caption{$L_2$ loss of edge LDP and $k$-stars LDP with various n ($\epsilon =0.1,k=p=q=2$).}
    \label{fig:l2_changing-n}
\end{figure}
It can be seen from Fig. \ref{fig:re_changing-n} and Fig. \ref{fig:l2_changing-n} that $k$-stars LDP algorithm gains a better utility than edge LDP algorithm in all instances, as n changes. In Gplus and Facebook, the $L_2$ loss of both edge LDP algorithm and $k$-stars LDP algorithm shows an incline trend as n increases, and the relative error shows a decline trend. This is because the computational complexity of $L_2$ loss can be expressed as $O(f_{22}(G)+f_{12}(G))$, where $f_{22}(G)$ and $f_{12}(G)$ are the number of $(2,2)$-clique and $(1,2)$-clique in G respectively, according to Theorem \ref{theo:utility}. The number of the two subgraphs grows as n increases. Thus, the $L_2$ loss increases as n increases. Also, the computational complexity of relative error is $O(\sqrt{\frac{f_{12}(G)}{f_{22}^{2}(G)}})$. Since graphs of Gplus and Facebook are dense, the number of $(2,2)$-clique is larger than that of $(1,2)$-clique. As a result, the relative error declines as n increases.

However, the situation is changed in IMDB and GitHub. This is because the graphs of IMDB and GitHub are sparser than those of Gplus and Facebook. The computational complexity of $L_2$ loss and relative error are $O(f_{22}(G)+f_{12}(G))$ and $O(\sqrt{\frac{f_{12}(G)}{f_{22}^{2}(G)}})$, as illustrated above. Because of the sparsity of the graphs in IMDB and GitHub, the number of $(2,2)$-clique is lower than that of $(1,2)$-clique. Thus, the trends of IMDB and GitHub are different from those of Gplus and Facebook. The numbers of $(2,2)$-clique and $(1,2)$-clique in Gplus, IMDB, GitHub and Facebook are listed in Table \ref{tab:GI_2N+1N} and Table \ref{tab:GF_2N+1N}.

We notice that the $L_2$ loss of $k$-stars LDP in GitHub is smaller than that of the rest datasets, as shown in Fig. \ref{fig:l2_changing-n}. It can be concluded that our $k$-stars LDP algorithms work better in a sparse graph, which means it can work better in the practical scenario as real social network tends to be sparse.\\
\begin{table*}[hpt]
\caption{The number of $(2,2)$-clique and $(1,2)$-clique in Gplus and IMDB.}
\centering
\begin{tabular}{ccccccc}
\toprule
\multirow{2}{*}{Node} 
& \multicolumn{3}{c}{Gplus}  & \multicolumn{3}{c}{IMDB}  \\                                                                     
\cline{2-7} & \multicolumn{1}{c}{$(2,2)$-clique} & \multicolumn{1}{c}{$(1,2)$-clique} & $2f_{22}(G)+f_{12}(G)$ & \multicolumn{1}{c}{$(2,2)$-clique} & \multicolumn{1}{c}{$(1,2)$-clique} & $2f_{22}(G)+f_{12}(G)$ \\ 
\midrule
1000  & \multicolumn{1}{c}{12}           & \multicolumn{1}{c}{214}          & 238                    & \multicolumn{1}{c}{0}            & \multicolumn{1}{c}{32}           & 32                     \\ 
2000  & \multicolumn{1}{c}{624}          & \multicolumn{1}{c}{1788}         & 3036                   & \multicolumn{1}{c}{222}          & \multicolumn{1}{c}{451}          & 895                    \\ 
3000  & \multicolumn{1}{c}{8698}         & \multicolumn{1}{c}{9284}         & 26680                  & \multicolumn{1}{c}{958}          & \multicolumn{1}{c}{1960}         & 3876                   \\ 
4000  & \multicolumn{1}{c}{26604}        & \multicolumn{1}{c}{20364}        & 73572                  & \multicolumn{1}{c}{2136}         & \multicolumn{1}{c}{4722}         & 8994                   \\ 
5000  & \multicolumn{1}{c}{90542}        & \multicolumn{1}{c}{44973}        & 226057                 & \multicolumn{1}{c}{2900}         & \multicolumn{1}{c}{7577}         & 13377                  \\ 
6000  & \multicolumn{1}{c}{187412}       & \multicolumn{1}{c}{75553}        & 450377                 & \multicolumn{1}{c}{5132}         & \multicolumn{1}{c}{12193}        & 22457                  \\ 
7000  & \multicolumn{1}{c}{251920}       & \multicolumn{1}{c}{103733}       & 607573                 & \multicolumn{1}{c}{14858}        & \multicolumn{1}{c}{22574}        & 52290                  \\ 
8000  & \multicolumn{1}{c}{466972}       & \multicolumn{1}{c}{173737}       & 1107681                & \multicolumn{1}{c}{28188}        & \multicolumn{1}{c}{32898}        & 89274                  \\ 
9000  & \multicolumn{1}{c}{318545}       & \multicolumn{1}{c}{246808}       & 883898                 & \multicolumn{1}{c}{25723}        & \multicolumn{1}{c}{47266}        & 98712                  \\ 
10000 & \multicolumn{1}{c}{453729}       & \multicolumn{1}{c}{339028}       & 1246486                & \multicolumn{1}{c}{31100}        & \multicolumn{1}{c}{64122}        & 126322                 \\ 
\bottomrule
\end{tabular}
\label{tab:GI_2N+1N}
\end{table*}
\begin{table*}[hpt]
\caption{The number of $(2,2)$-clique and $(1,2)$-clique in GitHub and Facebook.}
\centering
\begin{tabular}{ccccccc}
\toprule
\multirow{2}{*}{Node}
& \multicolumn{3}{c}{GitHub}   & \multicolumn{3}{c}{Facebook}         \\ 
\cline{2-7} & \multicolumn{1}{c}{$(2,2)$-clique} & \multicolumn{1}{c}{$(1,2)$-clique} & $2f_{22}(G)+f_{12}(G)$ & \multicolumn{1}{c}{$(2,2)$-clique} & \multicolumn{1}{c}{$(1,2)$-clique} & $2f_{22}(G)+f_{12}(G)$ \\ 
\midrule
1000  & \multicolumn{1}{c}{0}            & \multicolumn{1}{c}{2}            & 2                      & \multicolumn{1}{c}{16}           & \multicolumn{1}{c}{262}          & 294                    \\
2000  & \multicolumn{1}{c}{0}            & \multicolumn{1}{c}{8}            & 8                      & \multicolumn{1}{c}{142}          & \multicolumn{1}{c}{1989}         & 2273                   \\
3000  & \multicolumn{1}{c}{0}            & \multicolumn{1}{c}{23}           & 23                     & \multicolumn{1}{c}{1132}         & \multicolumn{1}{c}{6993}         & 9257                   \\
4000  & \multicolumn{1}{c}{0}            & \multicolumn{1}{c}{111}          & 111                    & \multicolumn{1}{c}{4110}         & \multicolumn{1}{c}{16869}        & 25089                  \\
5000  & \multicolumn{1}{c}{0}            & \multicolumn{1}{c}{164}          & 164                    & \multicolumn{1}{c}{9656}         & \multicolumn{1}{c}{33895}        & 53207                  \\
6000  & \multicolumn{1}{c}{8}            & \multicolumn{1}{c}{366}          & 382                    & \multicolumn{1}{c}{21274}        & \multicolumn{1}{c}{63022}        & 105570                 \\
7000  & \multicolumn{1}{c}{10}           & \multicolumn{1}{c}{472}          & 492                    & \multicolumn{1}{c}{38016}        & \multicolumn{1}{c}{95924}        & 171956                 \\
8000  & \multicolumn{1}{c}{28}           & \multicolumn{1}{c}{945}          & 1001                   & \multicolumn{1}{c}{60972}        & \multicolumn{1}{c}{134591}       & 256535                 \\
9000  & \multicolumn{1}{c}{30}           & \multicolumn{1}{c}{1259}         & 1319                   & \multicolumn{1}{c}{105914}       & \multicolumn{1}{c}{191980}       & 403808                 \\
10000 & \multicolumn{1}{c}{46}           & \multicolumn{1}{c}{1676}         & 1768                   & \multicolumn{1}{c}{152972}       & \multicolumn{1}{c}{258821}       & 564765                 \\
\bottomrule
\end{tabular}
\label{tab:GF_2N+1N}
\end{table*}

\noindent
\textbf{Evaluating different $(p,q)$-cliques.} We also evaluate the performances of edge LDP algorithm and $k$-stars LDP algorithm for different $(p,q)$-cliques. In the experiment, we set $n=10^3$, $\epsilon = 0.1$, $k=2$ for $(2,2)$-clique and $k=3$ for $(2,3)$-clique and $(3,3)$-clique. Fig. \ref{fig:different-pq} shows the results of the two algorithms’ performances for different $(p,q)$-cliques.
\begin{figure}[htbp]
	\centering
	\includegraphics[scale=0.37]{./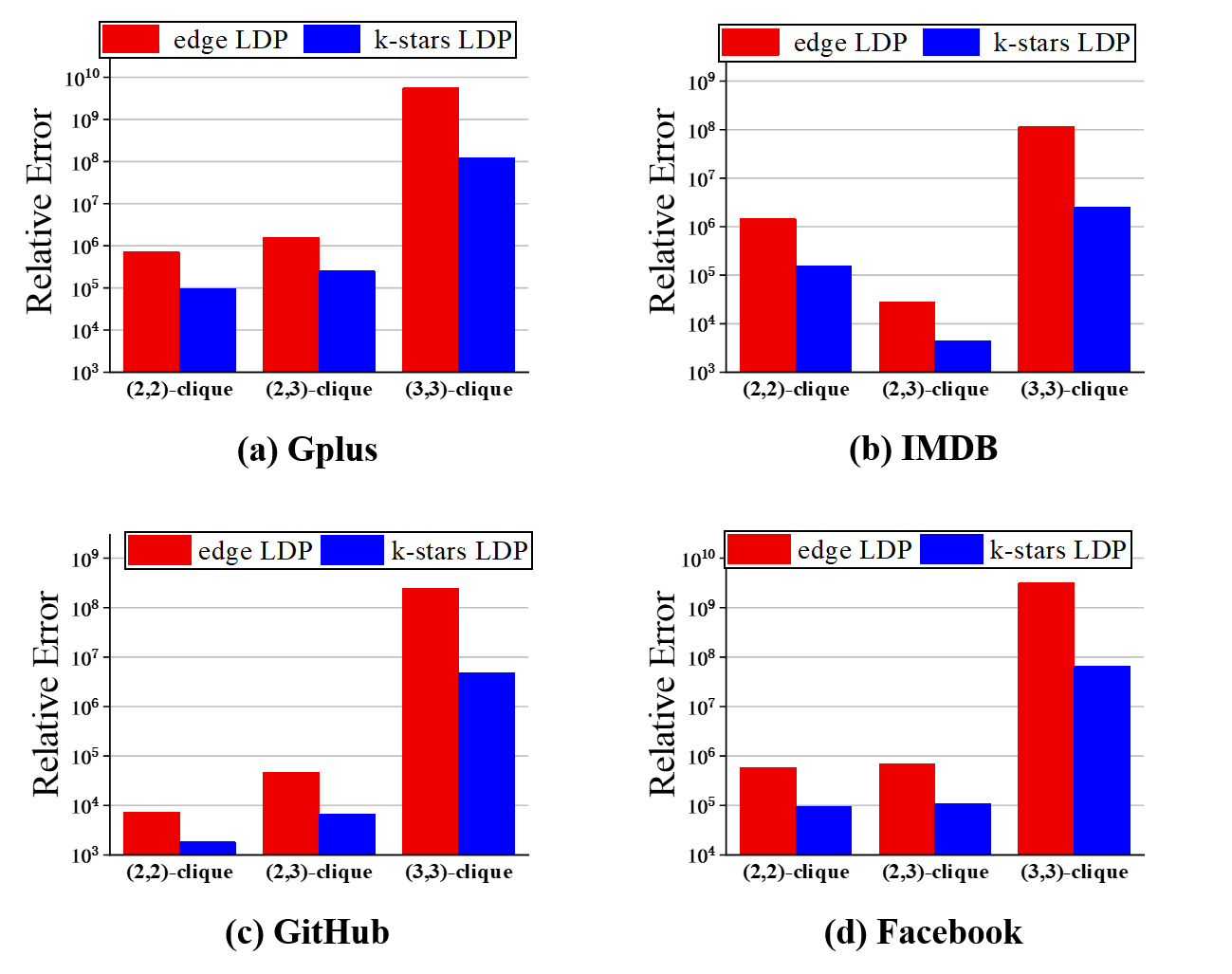}
	\caption{Relative error of edge LDP and $k$-stars LDP for different $(p,q)$-cliques ($n=10^3, \epsilon =0.1$).}
    \label{fig:different-pq}
\end{figure}
It can be seen from Fig. \ref{fig:different-pq} that $k$-stars LDP algorithm outperforms than edge LDP algorithm for all of the selected $(p,q)$-cliques, from $(2,2)$-clique to $(3,3)$-clique. Because $k$-stars LDP algorithm utilizes the structure information within $(p,q)$-cliques, which is $k$-stars, $k$-stars LDP algorithm requires far less noise than edge LDP algorithm, and obtains a better performance than it. Also, we notice that the gap of relative error between edge LDP and $k$-stars LDP increases as p and q increase. This is illustrated by the variance upper bounds of Theorem \ref{theo:utility}. The larger the p and q are, the larger the gap is. The results illustrate the effectiveness and robustness of $k$-stars LDP algorithm for different $(p,q)$-cliques, especially for the complex ones.\\

\noindent
\textbf{The effectiveness of absolute value correction technique.} As is illustrated in Section \ref{sec:k-stars LDP}, absolute value correction technique is to solve the negative-output problem. In the experiment, we set $\epsilon = 0.1$, $k=p=q=2$ and $n=10^3$.
\begin{figure}[htbp]
	\centering
	 \includegraphics[scale=0.2]{./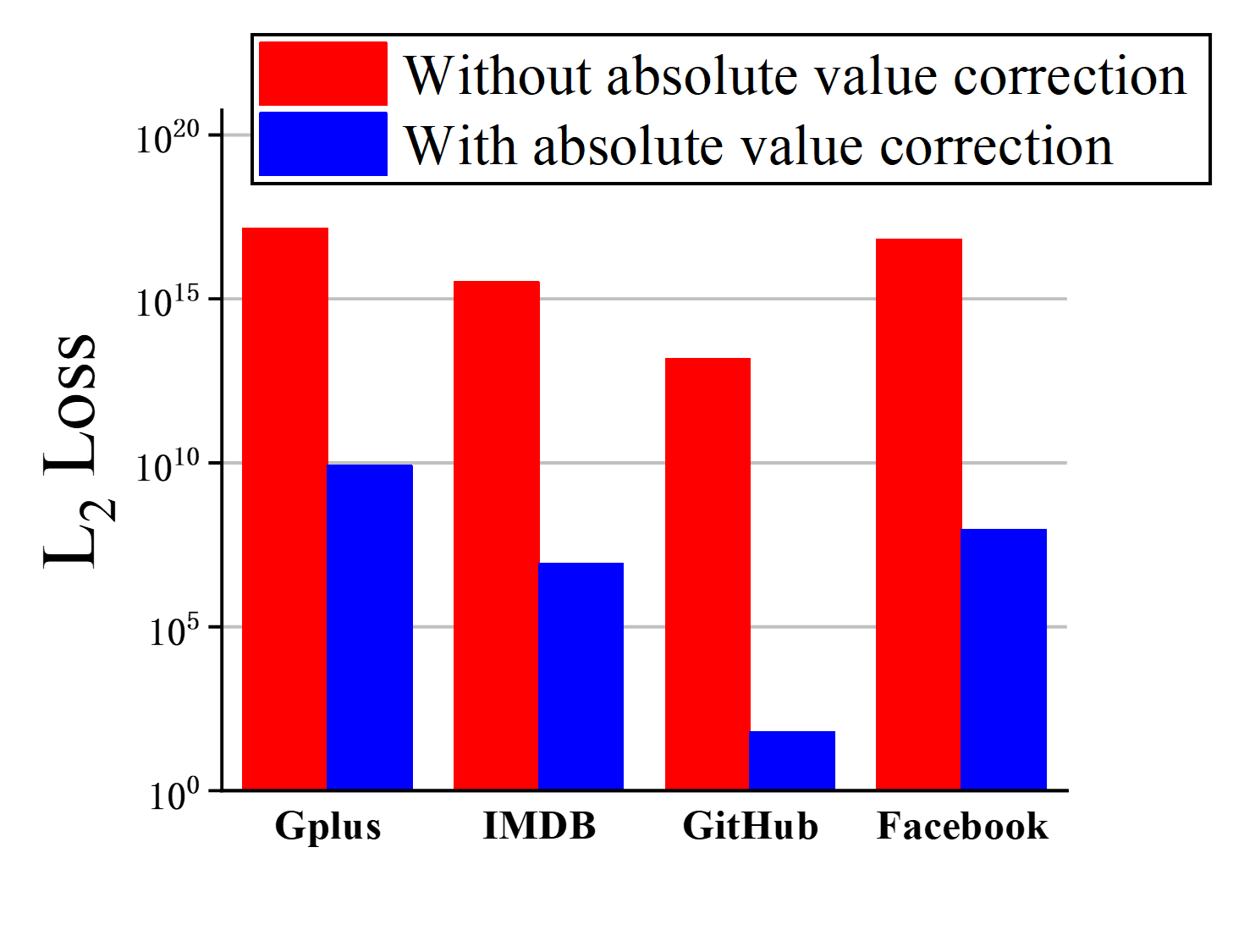}
	\caption{$L_2$ loss of algorithm with/without absolute value correction technique. ($n=10^3,\epsilon =0.1,k=p=q=2$).}
    \label{fig:absolute}
\end{figure}
It can be seen from Fig. \ref{fig:absolute} that our absolute value correction technique greatly reduced the $L_2$ loss of the algorithms, with the performance enhancement over $10^7$ order of magnitudes. Since the graph is very sparse, the algorithms tend to output negative value, due to the unbiased correction term. With the absolute value correction technique, the constraints of subgraph count problem are assured, thus leading to a better data utility.\\

\noindent
\textbf{The effectiveness of $k$-stars sampling.} As Section \ref{sec:k-stars LDP} illustrated, we borrow the basic idea of edge sampling and propose the $k$-stars sampling. In the experiment, we set $\epsilon=0.1$, $k=p=q=2$ and the sampling ratio $\rho =0.9$. Fig. \ref{fig:k-stars-sampling} shows the results.
\begin{figure}[htbp]
	\centering
	\includegraphics[scale=0.37]{./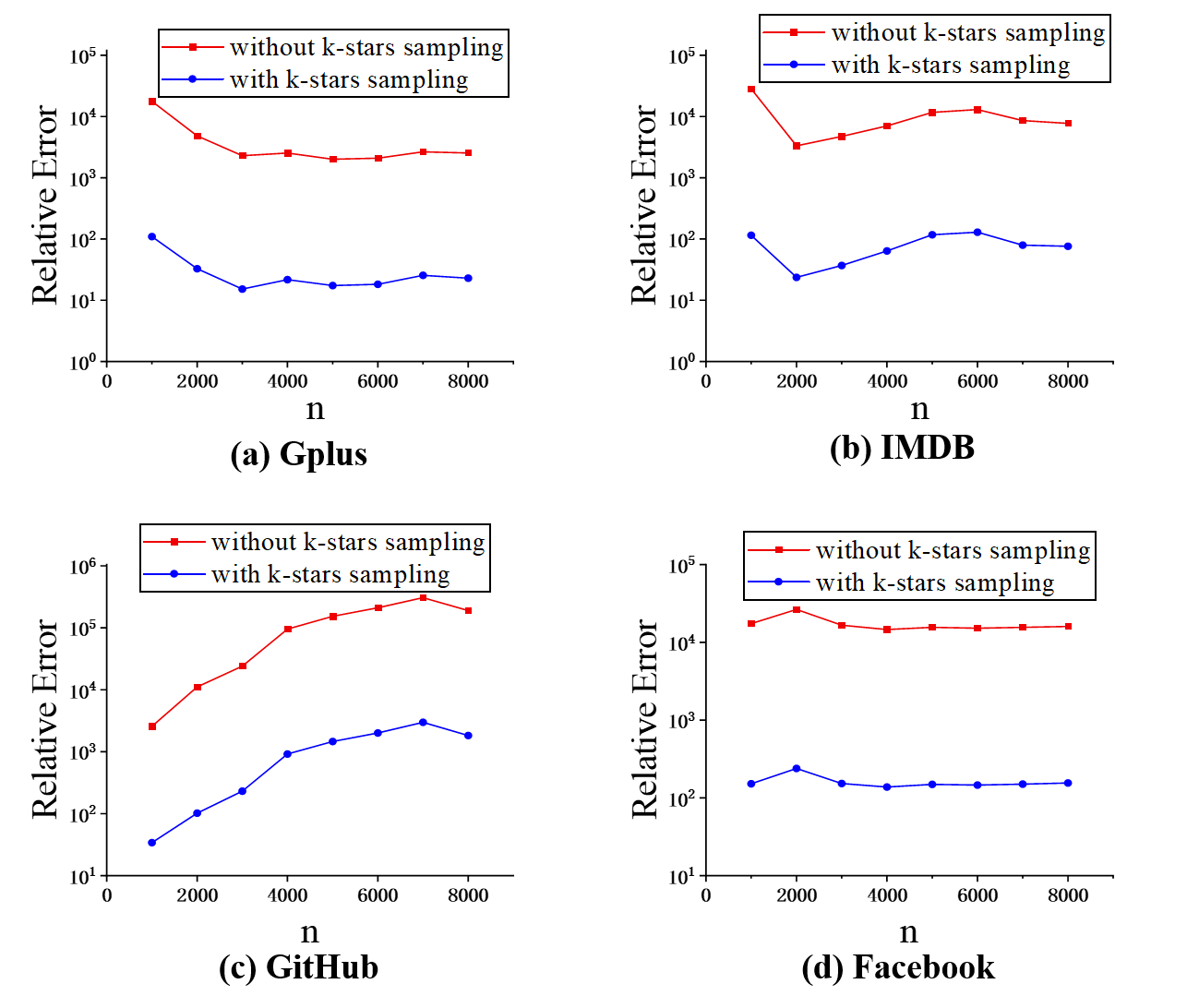}
	\caption{Relative error of $k$-stars LDP with/without $k$-stars sampling ($\epsilon =0.1,\rho=0.9, k=p=q=2$).}
    \label{fig:k-stars-sampling}
\end{figure}
It can be seen from Fig. \ref{fig:k-stars-sampling} that $k$-stars sampling can greatly reduce the relative error about 3 orders of magnitude. The RR mechanism in $k$-stars LDP tends to produces more noisy $k$-stars than the original graph. Thus, $k$-stars sampling can reduce the redundant noisy $k$-stars and improve the utility.\\

\noindent
\textbf{Increasing the privacy budget of edge LDP.} As illustrated in Theorem \ref{theo:privacy}, an $\epsilon-k$-stars LDP algorithm can provide $\frac{k2^{(k-1)}}{2^k-1}\epsilon -$edge LDP. One may wonder whether the proposed $\epsilon-k$-stars LDP algorithm still provides a better utility than the edge LDP algorithm whose privacy budget is set to $\frac{k2^{(k-1)}}{2^k-1}\epsilon$. We evaluate and compare the performances under such two circumstances. The privacy budgets are set to 0.133 for the edge LDP algorithm and 0.1 for the k-stars LDP algorithm while the other parameters remain the same as $k=p=q=2$. The results of relative error are shown in Fig. \ref{fig:re_increasing-e}, which reveal that our $k$-stars LDP algorithm still outperforms the edge LDP algorithm in all instances, despite the increased privacy budget for the edge LDP algorithm. To explain the results, $k$-stars LDP only requires one noisy wedge instead of two noisy edges. Thus, $k$-stars LDP significantly reduces the amount of required noisy and enhances the utility.\\
\begin{figure}[htbp]
	\centering
	\includegraphics[scale=0.37]{./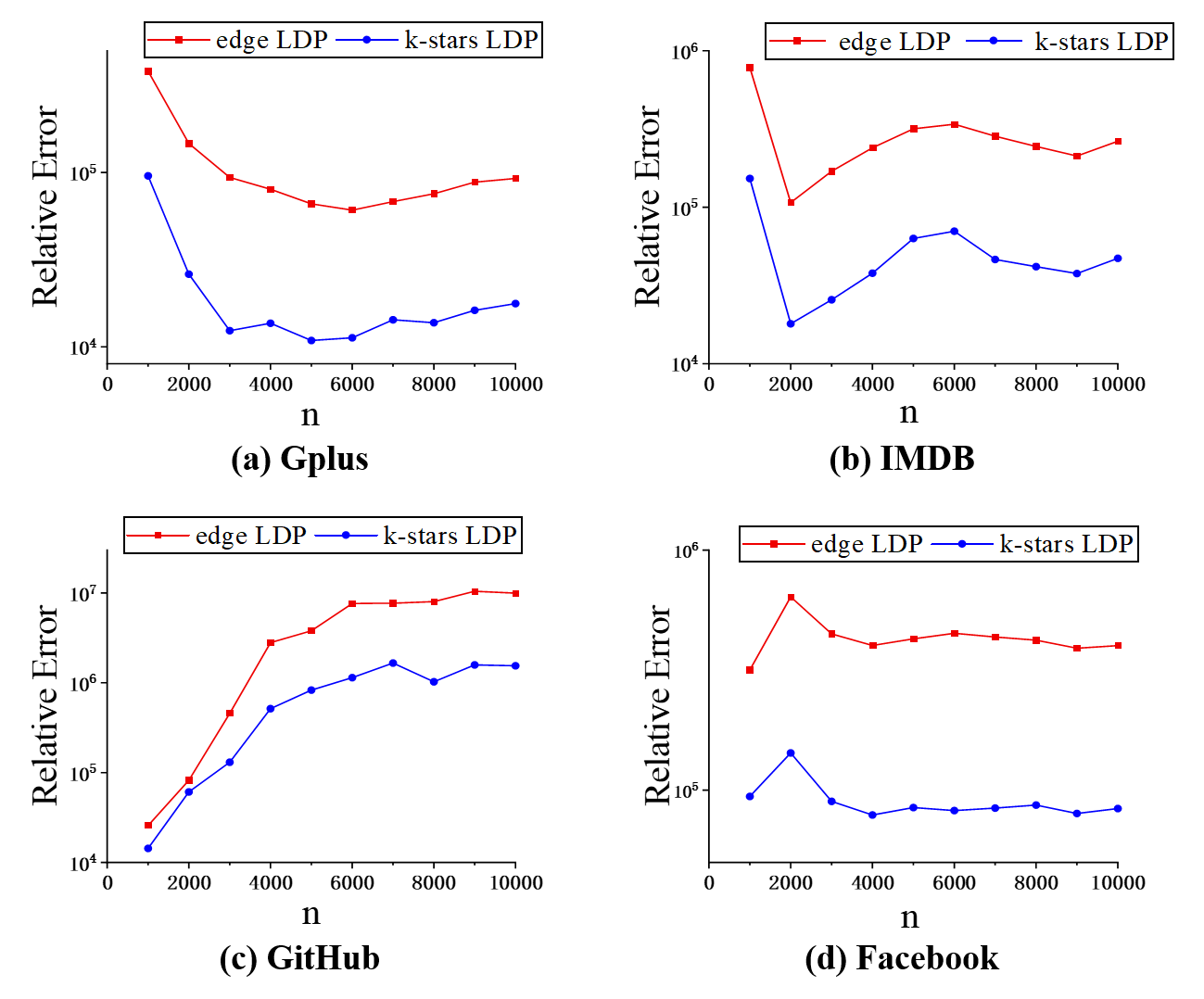}	
	\caption{Relative error of edge LDP and $k$-stars LDP with various n ($\epsilon =0.133, 0.1$ for edge LDP and $k$-stars LDP respectively, $k=p=q=2$).}
    \label{fig:re_increasing-e}
\end{figure}

\noindent
\textbf{Summary.} In summary, our $k$-stars LDP algorithm outperforms traditional edge LDP algorithm in every aspect evaluated in the experiments. When setting the same privacy budget, the $k$-stars LDP algorithm improves the utility performance over traditional edge LDP algorithm. Moreover, the superiority of $k$-stars LDP algorithm holds even if we increase the privacy budget of edge LDP algorithm. $K$-stars LDP algorithm works better in a sparse graph than in a dense graph, which means it can work better in the practical scenario as the real social network tends to be sparse. Also, the superiority of $k$-stars LDP algorithm is expanded as $(p,q)$-clique becomes more and more complex. Meanwhile, the effectiveness of $k$-stars sampling and absolute value correction is evaluated as well.
\section{Conclusions}
\label{sec:con}

In this paper, we propose the novel idea of $k$-stars LDP and the novel $k$-stars LDP algorithm for $(p,q)$-clique enumeration with a small estimation error. Based on the fact that the effectiveness of traditional edge LDP algorithm lies in the capability of obfuscating the existence of $k$-stars, we propose the $k$-stars neighboring list to enable our $k$-stars LDP algorithm to flip the 0/1 in the list. With Warner's RR and two-round user-collector interaction mechanism, the $k$-stars LDP algorithm for $(p,q)$-clique enumeration is designed to count $(p,q)$-clique under local differential privacy. Moreover, we propose the absolute value correction technique and the $k$-stars sampling technique to reduce the estimation error of the algorithm. Both theoretical analysis and experiments have proved the superiority of the proposed $k$-stars LDP algorithm over the traditional edge LDP algorithm, which shows that our $k$-stars LDP algorithm can greatly enhance the data utility while sacrificing a relatively small privacy budget.


\bibliographystyle{IEEEtran}
\bibliography{reference}

\begin{thebibliography}{10}
\providecommand{\url}[1]{#1}
\csname url@samestyle\endcsname
\providecommand{\newblock}{\relax}
\providecommand{\bibinfo}[2]{#2}
\providecommand{\BIBentrySTDinterwordspacing}{\spaceskip=0pt\relax}
\providecommand{\BIBentryALTinterwordstretchfactor}{4}
\providecommand{\BIBentryALTinterwordspacing}{\spaceskip=\fontdimen2\font plus
\BIBentryALTinterwordstretchfactor\fontdimen3\font minus \fontdimen4\font\relax}
\providecommand{\BIBforeignlanguage}[2]{{%
\expandafter\ifx\csname l@#1\endcsname\relax
\typeout{** WARNING: IEEEtran.bst: No hyphenation pattern has been}%
\typeout{** loaded for the language `#1'. Using the pattern for}%
\typeout{** the default language instead.}%
\else
\language=\csname l@#1\endcsname
\fi
#2}}
\providecommand{\BIBdecl}{\relax}
\BIBdecl

\bibitem{wang2021efficient}
K.~Wang, W.~Zhang, X.~Lin, Y.~Zhang, L.~Qin, and Y.~Zhang, ``Efficient and effective community search on large-scale bipartite graphs,'' in \emph{2021 IEEE 37th International Conference on Data Engineering (ICDE)}.\hskip 1em plus 0.5em minus 0.4em\relax IEEE, 2021, pp. 85--96.

\bibitem{wang2021discovering}
K.~Wang, W.~Zhang, Y.~Zhang, L.~Qin, and Y.~Zhang, ``Discovering significant communities on bipartite graphs: An index-based approach,'' \emph{IEEE Transactions on Knowledge and Data Engineering}, 2021.

\bibitem{wang2022towards}
K.~Wang, X.~Lin, L.~Qin, W.~Zhang, and Y.~Zhang, ``Towards efficient solutions of bitruss decomposition for large-scale bipartite graphs,'' \emph{The VLDB Journal}, vol.~31, no.~2, pp. 203--226, 2022.

\bibitem{yang2021p}
J.~Yang, Y.~Peng, and W.~Zhang, ``(p, q)-biclique counting and enumeration for large sparse bipartite graphs,'' \emph{Proceedings of the VLDB Endowment}, vol.~15, no.~2, pp. 141--153, 2021.

\bibitem{imola2021locally}
J.~Imola, T.~Murakami, and K.~Chaudhuri, ``Locally differentially private analysis of graph statistics,'' in \emph{30th USENIX Security Symposium (USENIX Security 21)}, 2021, pp. 983--1000.

\bibitem{dwork2008differential}
C.~Dwork, ``Differential privacy: A survey of results,'' in \emph{International conference on theory and applications of models of computation}.\hskip 1em plus 0.5em minus 0.4em\relax Springer, 2008, pp. 1--19.

\bibitem{ding2021differentially}
X.~Ding, S.~Sheng, H.~Zhou, X.~Zhang, Z.~Bao, P.~Zhou, and H.~Jin, ``Differentially private triangle counting in large graphs,'' \emph{IEEE Transactions on Knowledge and Data Engineering}, 2021.

\bibitem{ficek2021differential}
J.~Ficek, W.~Wang, H.~Chen, G.~Dagne, and E.~Daley, ``Differential privacy in health research: A scoping review,'' \emph{Journal of the American Medical Informatics Association}, vol.~28, no.~10, pp. 2269--2276, 2021.

\bibitem{zhao2022survey}
Y.~Zhao and J.~Chen, ``A survey on differential privacy for unstructured data content,'' \emph{ACM Computing Surveys (CSUR)}, 2022.

\bibitem{girgis2021shuffled}
A.~Girgis, D.~Data, S.~Diggavi, P.~Kairouz, and A.~T. Suresh, ``Shuffled model of differential privacy in federated learning,'' in \emph{International Conference on Artificial Intelligence and Statistics}.\hskip 1em plus 0.5em minus 0.4em\relax PMLR, 2021, pp. 2521--2529.

\bibitem{bi2022distribution}
X.~Bi and X.~Shen, ``Distribution-invariant differential privacy,'' \emph{Journal of Econometrics}, 2022.

\bibitem{karwa2011private}
V.~Karwa, S.~Raskhodnikova, A.~Smith, and G.~Yaroslavtsev, ``Private analysis of graph structure,'' \emph{Proceedings of the VLDB Endowment}, vol.~4, no.~11, pp. 1146--1157, 2011.

\bibitem{bi2020privacy}
M.~Bi, Y.~Wang, Z.~Cai, and X.~Tong, ``A privacy-preserving mechanism based on local differential privacy in edge computing,'' \emph{China Communications}, vol.~17, no.~9, pp. 50--65, 2020.

\bibitem{ye2020lf}
Q.~Ye, H.~Hu, M.~H. Au, X.~Meng, and X.~Xiao, ``Lf-gdpr: A framework for estimating graph metrics with local differential privacy,'' \emph{IEEE Transactions on Knowledge and Data Engineering}, 2020.

\bibitem{bebensee2019local}
B.~Bebensee, ``Local differential privacy: a tutorial,'' \emph{arXiv preprint arXiv:1907.11908}, 2019.

\bibitem{hidano2022degree}
S.~Hidano and T.~Murakami, ``Degree-preserving randomized response for graph neural networks under local differential privacy,'' \emph{arXiv preprint arXiv:2202.10209}, 2022.

\bibitem{yang2020local}
M.~Yang, L.~Lyu, J.~Zhao, T.~Zhu, and K.-Y. Lam, ``Local differential privacy and its applications: A comprehensive survey,'' \emph{arXiv preprint arXiv:2008.03686}, 2020.

\bibitem{kasiviswanathan2011can}
S.~P. Kasiviswanathan, H.~K. Lee, K.~Nissim, S.~Raskhodnikova, and A.~Smith, ``What can we learn privately?'' \emph{SIAM Journal on Computing}, vol.~40, no.~3, pp. 793--826, 2011.

\bibitem{imola2022communication}
J.~Imola, T.~Murakami, and K.~Chaudhuri, ``$\{$Communication-Efficient$\}$ triangle counting under local differential privacy,'' in \emph{31st USENIX Security Symposium (USENIX Security 22)}, 2022, pp. 537--554.

\bibitem{warner1965randomized}
S.~L. Warner, ``Randomized response: A survey technique for eliminating evasive answer bias,'' \emph{Journal of the American Statistical Association}, vol.~60, no. 309, pp. 63--69, 1965.

\bibitem{arifuzzaman2013patric}
S.~Arifuzzaman, M.~Khan, and M.~Marathe, ``Patric: A parallel algorithm for counting triangles in massive networks,'' in \emph{Proceedings of the 22nd ACM international conference on Information \& Knowledge Management}, 2013, pp. 529--538.

\bibitem{bera2020degeneracy}
S.~K. Bera and C.~Seshadhri, ``How the degeneracy helps for triangle counting in graph streams,'' in \emph{Proceedings of the 39th ACM SIGMOD-SIGACT-SIGAI Symposium on Principles of Database Systems}, 2020, pp. 457--467.

\bibitem{bera2020count}
------, ``How to count triangles, without seeing the whole graph,'' in \emph{Proceedings of the 26th ACM SIGKDD International Conference on Knowledge Discovery \& Data Mining}, 2020, pp. 306--316.

\bibitem{eden2017approximately}
T.~Eden, A.~Levi, D.~Ron, and C.~Seshadhri, ``Approximately counting triangles in sublinear time,'' \emph{SIAM Journal on Computing}, vol.~46, no.~5, pp. 1603--1646, 2017.

\bibitem{chu2011triangle}
S.~Chu and J.~Cheng, ``Triangle listing in massive networks and its applications,'' in \emph{Proceedings of the 17th ACM SIGKDD international conference on Knowledge discovery and data mining}, 2011, pp. 672--680.

\bibitem{atminas2012linear}
A.~Atminas, V.~V. Lozin, and I.~Razgon, ``Linear time algorithm for computing a small biclique in graphs without long induced paths,'' in \emph{Scandinavian Workshop on Algorithm Theory}.\hskip 1em plus 0.5em minus 0.4em\relax Springer, 2012, pp. 142--152.

\bibitem{dawande1996biclique}
M.~Dawande, P.~Keskinocak, and S.~Tayur, ``On the biclique problem in bipartite graphs,'' \emph{Pittsburgh: GSIA Working Paper: Carnegie Mellon University}, 1996.

\bibitem{chen2022efficient}
L.~Chen, C.~Liu, R.~Zhou, J.~Xu, and J.~Li, ``Efficient maximal biclique enumeration for large sparse bipartite graphs,'' \emph{Proceedings of the VLDB Endowment}, vol.~15, no.~8, pp. 1559--1571, 2022.

\bibitem{lu2020biclique}
Y.~Lu, C.~A. Phillips, and M.~A. Langston, ``Biclique: an r package for maximal biclique enumeration in bipartite graphs,'' \emph{BMC research notes}, vol.~13, no.~1, pp. 1--5, 2020.

\bibitem{tsourakakis2009doulion}
C.~E. Tsourakakis, U.~Kang, G.~L. Miller, and C.~Faloutsos, ``Doulion: counting triangles in massive graphs with a coin,'' in \emph{Proceedings of the 15th ACM SIGKDD international conference on Knowledge discovery and data mining}, 2009, pp. 837--846.

\bibitem{wu2016counting}
B.~Wu, K.~Yi, and Z.~Li, ``Counting triangles in large graphs by random sampling,'' \emph{IEEE Transactions on Knowledge and Data Engineering}, vol.~28, no.~8, pp. 2013--2026, 2016.

\bibitem{chen2019publishing}
X.~Chen, S.~Mauw, and Y.~Ram{\'\i}rez-Cruz, ``Publishing community-preserving attributed social graphs with a differential privacy guarantee,'' \emph{arXiv preprint arXiv:1909.00280}, 2019.

\bibitem{day2016publishing}
W.-Y. Day, N.~Li, and M.~Lyu, ``Publishing graph degree distribution with node differential privacy,'' in \emph{Proceedings of the 2016 International Conference on Management of Data}, 2016, pp. 123--138.

\bibitem{kasiviswanathan2013analyzing}
S.~P. Kasiviswanathan, K.~Nissim, S.~Raskhodnikova, and A.~Smith, ``Analyzing graphs with node differential privacy,'' in \emph{Theory of Cryptography Conference}.\hskip 1em plus 0.5em minus 0.4em\relax Springer, 2013, pp. 457--476.

\bibitem{zhang2015private}
J.~Zhang, G.~Cormode, C.~M. Procopiuc, D.~Srivastava, and X.~Xiao, ``Private release of graph statistics using ladder functions,'' in \emph{Proceedings of the 2015 ACM SIGMOD international conference on management of data}, 2015, pp. 731--745.

\bibitem{ye2020towards}
Q.~Ye, H.~Hu, M.~H. Au, X.~Meng, and X.~Xiao, ``Towards locally differentially private generic graph metric estimation,'' in \emph{2020 IEEE 36th International Conference on Data Engineering (ICDE)}.\hskip 1em plus 0.5em minus 0.4em\relax IEEE, 2020, pp. 1922--1925.

\bibitem{sun2019analyzing}
H.~Sun, X.~Xiao, I.~Khalil, Y.~Yang, Z.~Qin, H.~Wang, and T.~Yu, ``Analyzing subgraph statistics from extended local views with decentralized differential privacy,'' in \emph{Proceedings of the 2019 ACM SIGSAC Conference on Computer and Communications Security}, 2019, pp. 703--717.

\bibitem{hay2009accurate}
M.~Hay, C.~Li, G.~Miklau, and D.~Jensen, ``Accurate estimation of the degree distribution of private networks,'' in \emph{2009 Ninth IEEE International Conference on Data Mining}.\hskip 1em plus 0.5em minus 0.4em\relax IEEE, 2009, pp. 169--178.

\bibitem{zhang2020differentially}
H.~Zhang, S.~Latif, R.~Bassily, and A.~Rountev, ``Differentially-private control-flow node coverage for software usage analysis,'' in \emph{USENIX Security Symposium (USENIX Security)}, 2020.

\end{thebibliography}
\begin{IEEEbiography}[{\includegraphics
[width=1in,height=1.25in,clip,
keepaspectratio]{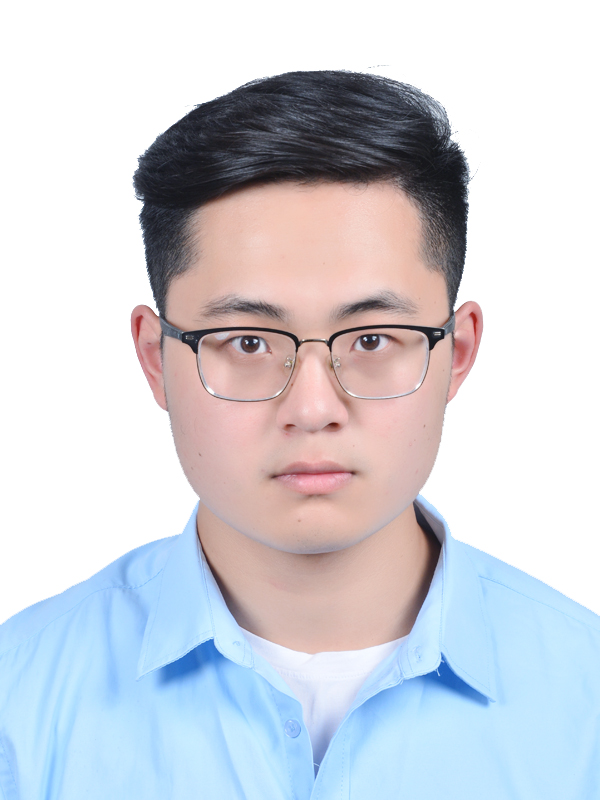}}]
{Henan Sun}
is currently a MS Candidate in Beijing Institute of Technology, China. He received his BS degree from the College of Information Science, Northwest A\&F University in 2022. His research interests include data privacy and graph neural networks.
\end{IEEEbiography}
\vspace{-0.5 cm}
\begin{IEEEbiography}[{\includegraphics
[width=1in,height=1.25in,clip,
keepaspectratio]{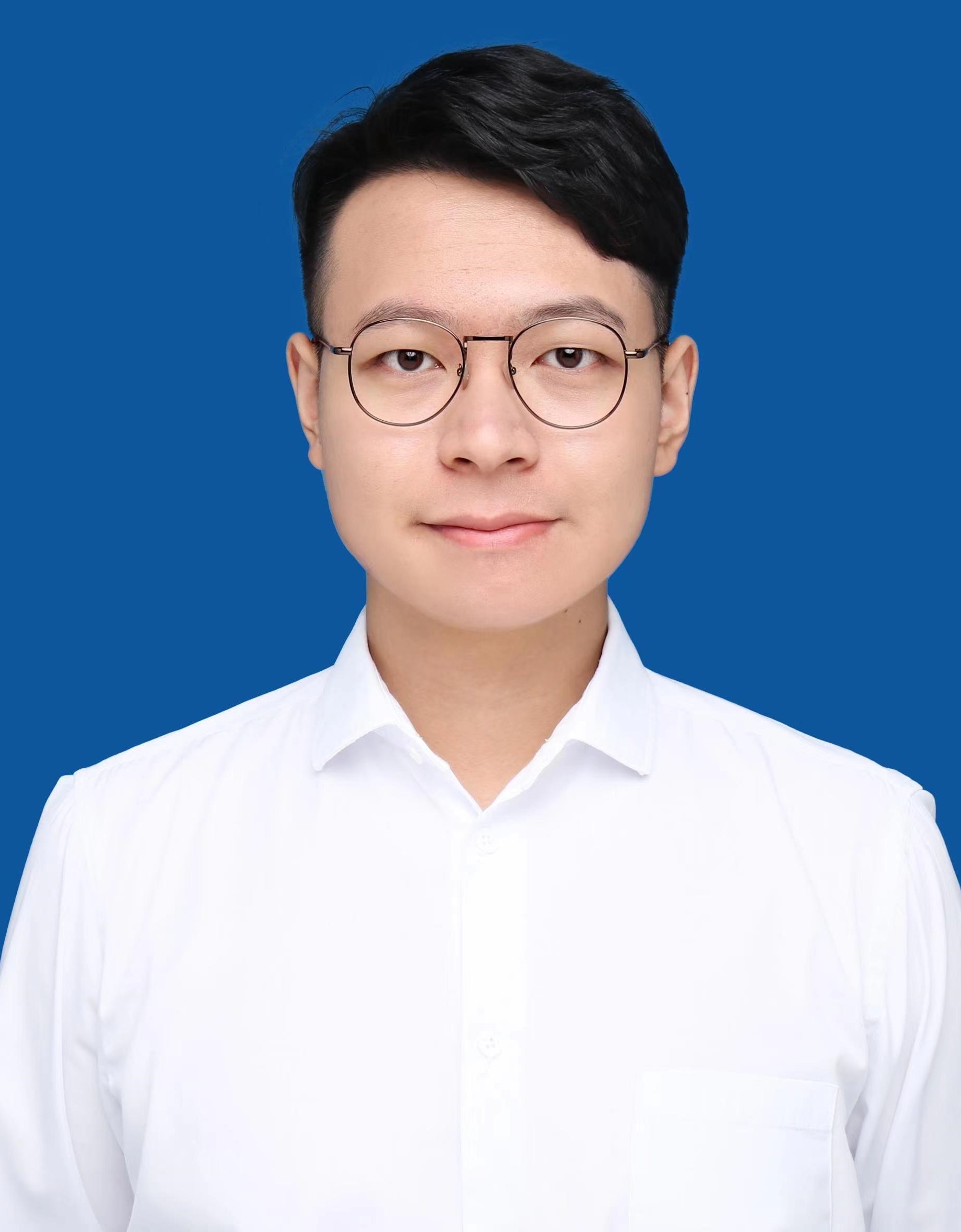}}]
{Zhengyu Wu}
is currently a PhD Candidate in Beijing Institute of Technology, China. He received his Bachelor and Master degree in International Relations from Brandeis University and Johns Hopkins University Paul H. Nitze School of Advanced International Studies(SAIS), respectively. His current research interests include Graph Neural Network, Federated Learning, and Machine Unlearning.
\end{IEEEbiography}

\begin{IEEEbiography}[{\includegraphics
[width=1in,height=1.25in,clip,
keepaspectratio]{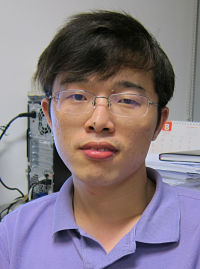}}]
{Rong-Hua Li}
received his PhD degree from the Department of Systems Engineering and Engineering Management, The Chinese University of Hong Kong in 2013 and joined the School of Computer and Software, Shenzhen University this year. He joined the School of Computer Science \& Technology, Beijing Institute of Technology in 2018 as a professor. His research interests include graph theory, graph computing system, spectral graph theory, graph neural networks and knowledge graph. He is the committee member of VLDB 2024, KDD 2023, WWW 2023.
\end{IEEEbiography}

\begin{IEEEbiography}[{\includegraphics
[width=1in,height=1.25in,clip,
keepaspectratio]{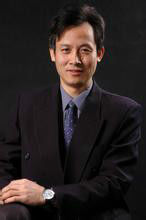}}]
{Guoren Wang}
received his MS, BS and PhD degrees from Northeastern University in 1988, 1991 and 1996. He joined the School of Computer Science \& Technology, Beijing Institute of Technology in 2018 as a professor. He is the member of the Discipline Evaluation Group of the State Council and the Expert Review Group of the Information Science Department of the National Natural Science Foundation of China. His research interests include uncertain data management, data intensive computing, visual media data analysis, and bioinformatics.
\end{IEEEbiography}

\begin{IEEEbiography}[{\includegraphics
[width=1in,height=1.25in,clip,
keepaspectratio]{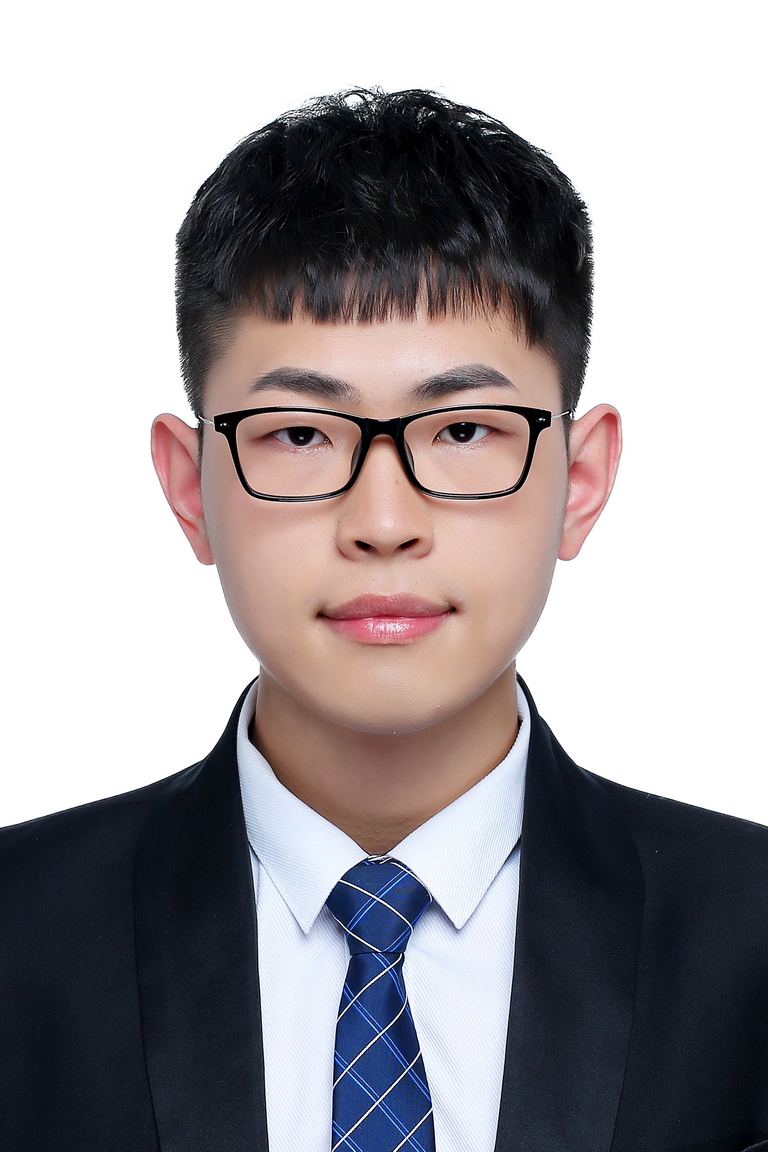}}]
{Zening Li}
is currently a PhD Candidate in Beijing Institute of Technology, China. His current research interests include differential privacy and social network analysis.
\end{IEEEbiography}

\end{document}